\newtheorem{theorem}{Theorem}
\newtheorem{lemma}{Lemma}
\newtheorem{corollary}{Corollary}
\def\algo{AMR$^2$\xspace}
\def\blalgo{Greedy-RRA\xspace}
\def\subilp{sub-ILP\xspace}
\def\P{{\mathcal P}}  
\def\INTGR{{\mathds Z}}   
\def\PI{{\mathcal{P}_\text{I}}}
\DeclareMathOperator*{\argmax}{arg\,max}
\newcommand{\floor}[1]{\left\lfloor #1 \right\rfloor}
\newcommand{\pushright}[1]{\ifmeasuring@#1\else\omit\hfill$\displaystyle#1$\fi\ignorespaces}
\newcommand{\mathleft}{\@fleqntrue\@mathmargin0pt}
\newcommand{\mathcenter}{\@fleqnfalse}
\begin{document}

\title{Offloading Algorithms for Maximizing Inference Accuracy on Edge Device Under a Time Constraint}
\author{
Andrea Fresa and Jaya Prakash Champati\\
IMDEA Networks Institute, Madrid, Spain\\
	E-mail:$\{$andrea.fresa,jaya.champati$\}$@imdea.org
}
\maketitle


\begin{abstract}
With the emergence of edge computing, the problem of offloading jobs between an Edge Device (ED) and an Edge Server (ES) received significant attention in the past. Motivated by the fact that an increasing number of applications are using Machine Learning (ML) inference from the data samples collected at the EDs, we study the problem of offloading \textit{inference jobs}
by considering the following novel aspects: 1) in contrast to a typical computational job, the processing time of an inference job depends on the size of the ML model, and 2) recently proposed Deep Neural Networks (DNNs) for resource-constrained devices provide the choice of scaling down the model size by trading off the inference accuracy. Considering that multiple ML models are available at the ED, and a powerful ML model is available at the ES, we formulate a general assignment problem with the objective of maximizing the total inference accuracy of $n$ data samples at the ED subject to a time constraint $T$ on the makespan. Noting that the problem is NP-hard, we propose an approximation algorithm Accuracy Maximization using LP-Relaxation and Rounding (\algo), and prove that it results in a makespan at most $2T$, and achieves a total accuracy that is lower by a small constant from the optimal total accuracy. Further, if the data samples are identical we propose Accuracy Maximization using Dynamic Programming (AMDP), an optimal pseudo-polynomial time algorithm. As proof of concept, we implemented \algo on a Raspberry Pi, equipped with MobileNets, that is connected to a server equipped with ResNet, and studied the total accuracy and makespan performance of \algo for image classification.
\end{abstract}

\section{Introduction}\label{sec:intro}


Edge computing is seen as a key component of future networks that augments the computation, memory, and battery limitations of Edge Devices (EDs) (e.g., IoT devices, mobile phones, etc.), by allowing the devices to offload computational jobs to nearby Edge Servers (ESs)~\cite{Shi2016}. Since the \textit{offloading decision}, i.e., which jobs to offload, is the key to minimizing the execution delay of the jobs and/or the energy consumption at the ED, it received significant attention in the past~\cite{Mach2017}.
Recently, an increasing number of applications are using Machine Learning (ML) inference from the data samples collected at the EDs, and  
there is a major thrust for deploying pre-trained Deep Neural Networks (DNNs) on the EDs as this has, among other advantages, reduced latency.
Thanks to the development of DNN models with reduced computation and storage requirements, possibly with reduced inference accuracy, and the advancements in the hardware of EDs\cite{Deng2020}, ML frameworks such as Tensorflow Lite~\cite{TFLite} and PyTorch Mobile~\cite{PyTorch} are now able to support the deployment of DNNs on EDs. 
In this context, we study the offloading decision between an ED and an ES for the \textit{inference jobs}, where an inference job refers to the execution of a pre-trained ML model on a data sample.

In comparison to the fixed processing time requirement of a generic computational job (typically represented by a directed acyclic task graph), the processing time requirement of an inference job depends on the ML model size: a larger model size results in longer processing time and may provide higher inference accuracy. For example, on Pixel 3 smartphone, ResNet~\cite{He2016} has size $178$ MB, requires $526$ ms, and provides $76.8$\% accuracy (Top-1 accuracy) for the ImageNet dataset~\cite{Imagenet2009}, while the smallest DNN model of MobileNet~\cite{Sandler2018} has size $1.9$ MB, requires $1.2$ ms, but provides $41.4$\% accuracy~\cite{TFLite}. Furthermore, recently developed DNNs for EDs allow for scaling the model size by simply setting a few hypeparameters (cf. \cite{teerapittayanon2016,Sandler2018,Han2019}),  enabling the EDs to choose between multiple model sizes. However, as we explain in Section~\ref{sec:related}, the offloading decision for inference jobs considering the above novel aspects has received little attention in the literature.
\begin{figure}[t!]
    \centering
    \includegraphics[width=2.7in]{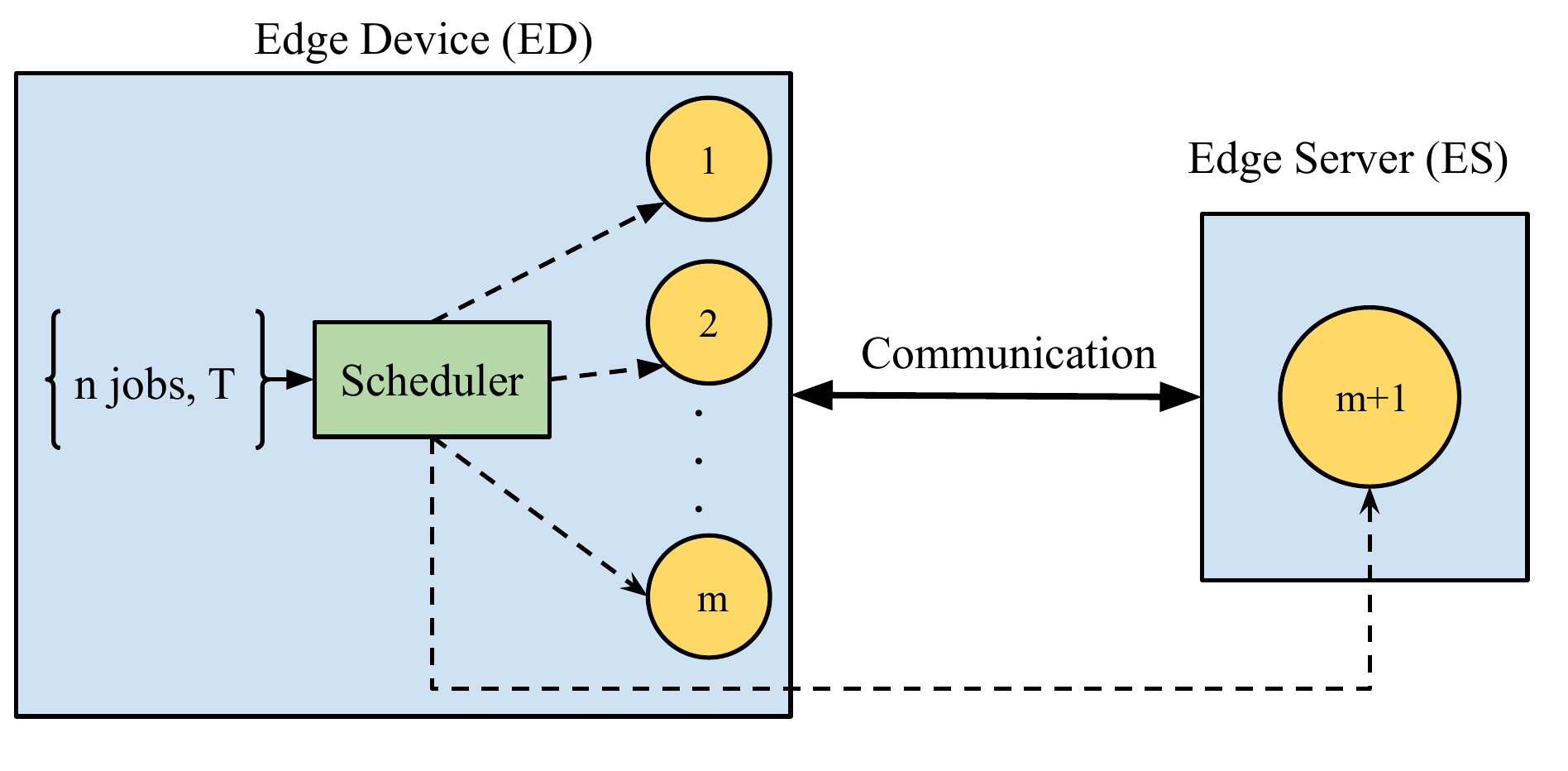}
    \caption{Scheduling inference jobs between an ED and an ES.}
    \vspace{-0.5cm}
    \label{fig:system}
\end{figure}
Taking into account the novel aspects for inference jobs, or simply \textit{jobs} in the sequel, we consider the system in Figure~\ref{fig:system}, where the ED has $m$ ML models to choose from, and the ES is equipped with a state-of-the-art ML model for a given application. 
Consider that $n$ jobs (corresponding to $n$ data samples) are available at the ED. It may offload them all to the ES to maximize the inference accuracy. However, offloading each job incurs a communication time to upload the data sample in addition to the processing time at the ES. This may result in a large \textit{makespan}, i.e., the total time to finish all the jobs. On the other hand, executing all the jobs on the smallest ML model at the ED may result not only in a smaller makespan, but also the lowest inference accuracy. Thus, a scheduler at the ED needs to strike a trade-off between the accuracy and the makespan. Toward this end, we formulate the following problem: \textit{given $n$ data samples at time zero, find a schedule that offloads a partition of the jobs to the ES and assigns the remaining jobs to $m$  models on the ED, such that the total accuracy is maximized and the makespan is within a time constraint $T$}. 
Solution to this problem will be beneficial to applications such as Google Photos where a set of photos selected by a user need to be classified into multiple categories in real time. Also, the problem has relevance to applications which do periodic scheduling, i.e., the ED periodically collects all the data samples arrived in a time period $T$ and aims to finish their processing within the next time period $T$.

Since the true accuracy (Top-1 accuracy) provided by a model for a given data sample can only be inferred after the job is executed, for analytical tractability, we consider the average test accuracy of the model is its accuracy for any data sample. Given the processing and communication times of the jobs, we formulate the problem as an Integer Linear Program (ILP). We note that the ILP is agnostic to the actual ML models used on the ED and the ES. Different ML models on the ED may correspond to different instantiations of the same DNN with different hyperparameter values (cf.~\cite{Sandler2018}), or they may correspond to different ML algorithms such as logistic regression, support vector machines, DNN, etc.

Solving the formulated ILP is challenging due to the following reasons. Partitioning the set of jobs between the ED and the ES is related to scheduling jobs on parallel machines~\cite{Pinedo2008}, and assigning the jobs to the models on the ED is related to the knapsack problem~\cite{Kellerer2004}, both are known to be NP-hard. A special case of our problem, where the ED has a single model ($m=1$), is the Generalized Assignment Problem (GAP) with two machines~\cite{Ross1975}. 
GAP is known to be APX-hard, and the best-known approximation algorithm provides a solution that has maskespan at most $2T$~\cite{Shmoys1993}. However, the algorithms for GAP and their performance guarantees are not directly applicable to our problem due to the additional aspect that, on the ED there are multiple models to choose from. 
We propose a novel algorithm that solves a Linear Programming relaxation (LP-relaxation) of the ILP, uses a counting argument to bound the number of fractional solutions, and solves a sub-ILP problem to round the fractional solution.

Our main contributions are summarized below:
\begin{itemize}
    \item We formulate the total accuracy maximization problem subject to a constraint $T$ on the makespan as an ILP. Noting that the ILP is NP-hard,  we propose an approximation algorithm Accuracy Maximization using LP-Relaxation and Rounding (\algo) to solve it. The runtime of \algo is $O(n^3(m+1)^3)$.
    \item We prove that the total accuracy achieved by \algo is at most a small constant (less than $2$), lower than the optimum total accuracy, and its makespan is at most $2T$.  
    \item For the case of identical jobs, i.e., the data samples are identical, we propose an optimal algorithm Accuracy Maximization using Dynamic Programming (AMDP), which does a greedy packing for the ES and solves a Cardinality Constrained Knapsack Problem (CCKP) for job assignments on the ED. 
    \item As proof of concept, we perform experiments using a Raspberry Pi, equipped with MobileNets, and a server, equipped with ResNet50, that are connected over a Local Area Network (LAN). Our application is image classification for the images from ImageNet. We estimate  processing and communication times for different sizes of images, and implemented \algo and a greedy algorithm on Raspberry Pi. Our results indicate that the total test accuracy achieved by \algo is close to that of the LP-relaxed solution, and its total true accuracy is, on average, $40$\% higher than that of the greedy algorithm. 
\end{itemize}

The rest of the paper is organized as follows. In Section~\ref{sec:related}, we present the related work. The system model is presented in Section~\ref{sec:model}. In Sections~\ref{sec:AMR2} and \ref{sec:analysis}, we present \algo and its performance bounds, respectively. In Section~\ref{sec:experiments}, we present the experimental results and finally conclude in Section~\ref{sec:conclusion}.
\section{Related Works}\label{sec:related}
In this section, we first present the related works for computation offloading problem and then discuss closely related classical job scheduling problems.
\subsection{Offloading and ML Inference Jobs}
Since the initial proposal of edge computing in~\cite{Satyanarayanan2009}, significant attention had been given to the computational offloading problem, wherein the ED needs to decide which jobs to offload, and how to offload them to an ES~\cite{Mach2017}. The objectives that were considered for optimizing the offloading decision are, 1) minimize the total execution delay of the jobs, see for example~\cite{Liu2016,Mao2016,Champati2017}, and 2) minimize the energy of the ED spent in computing the jobs, subject to a constraint on the execution delay, see for example~\cite{Chen2015,Kamoun2015,Wang2016}. However, the above works consider generic computation jobs, and the aspect of accuracy, which is relevant for the case of inference jobs, has not been considered.


Recently, a few works considered the problem of maximizing accuracy for  inference jobs on the ED~\cite{Wang2019,Ogden2020,Nikoloska2021}.
In \cite{Wang2019}, the authors studied the problem of maximizing the accuracy within a deadline for each frame of a video analytics application. They do not consider offloading to the edge and their solution is tailored to the DNNs that use early exits~\cite{teerapittayanon2016}. Similar problem was studied in \cite{Ogden2020}, where offloading between a mobile device and a cloud is considered. The authors account for the time-varying communication times by using model selection at the cloud and by allowing the duplication of processing the job a the mobile device. 
A heuristic solution was proposed in \cite{Nikoloska2021} for offloading inference jobs for maximizing inference accuracy subject to a maximum energy constraint. In contrast to the above works, we consider multiple models on the ED and provide performance guarantees for \algo.

\subsection{Job Scheduling}
As noted in Section~\ref{sec:intro}, our problem is related to the knapsack problem~\cite{Kellerer2004}. To see this, note that if it is not feasible to schedule on the ES and all jobs have to be assigned to the ED, then maximizing the total accuracy is equivalent to maximizing profit, and the constraint $T$ is equivalent to the capacity of knapsack. In this case, our problem turns out to be a generalization of the CCKP~\cite{Krzyszof1989}. Another special case of our problem, where the ED has only a single model, can be formulated as a GAP \cite{Ross1975,Cattrysse1992}, with two machines. In GAP, $n$ jobs (or items) have to be assigned to $r$ machines (or knapsacks). Each job-machine pair is characterized by two parameters: processing time and cost. The objective is to minimize the total cost subject to a time constraint $T$ on the makespan.
It is known that GAP is APX-hard~\cite{Chekuri2000}. 

In their seminal work \cite{Shmoys1993}, the authors proposed an algorithm for GAP that achieves minimum total cost and has makespan at most $2T$. Their method involves solving a sequence of LP feasibility problems, in order to tackle the processing times that are greater than $T$, and compute the minimum total cost using bisection search. Their algorithm can also be used for solving a related extension of GAP, where the cost of scheduling a job on a machine increases linearly with decrease in the processing time of the job. 
In comparison to this setting, the accuracies (equivalent to negative costs) are not linearly related to the processing times of the jobs and thus the proposed method in~\cite{Shmoys1993} is not directly applicable to the problem at hand. Our proposed algorithm \algo is different from their method in that it does not require to solve LP feasibility problems and the use of bisection search. Further, we prove the performance bounds using a different analysis technique  which is based on a counting argument for the LP-relaxation and solving a sub-problem of the ILP.

{\allowdisplaybreaks
\section{System Model}\label{sec:model}
Consider an ED and an ES connected over a network and the ED enlists the help of the ES for computation offloading. At time zero, $n$ inference jobs, each representing the processing requirement of a data sample on a pre-trained ML model, are available to a scheduler at the ED. Let $j$ denote the job index and $J = \{1,2,\ldots,n\}$ denote the set of job indices.

\subsection{ML Models and Accuracy}
The ED is equipped with $m$ pre-trained ML models, or simply models. Note that these may correspond to $m$ instantiations of the same DNN with different hyperparameter values resulting in different model sizes; see for example~\cite{teerapittayanon2016, Sandler2018}. Or, the models may correspond to different ML algorithms such as logistic regression, support vector machines, DNN, etc. Since the ES is a computationally powerful machine, we consider that it is equipped with a state-of-the-art model. Let $i$ denote the index of the model, and $M = \{1,2,\ldots,m,m+1\}$ denote the set of model indices, where models $1$ to $m$ are on the ED and model $m+1$ is the model on the ES. We note that our problem formulation and the solution are applicable to any family of ML models deployed on the ED and the ES.

Let $a_i \in [0,1]$ denote the average test accuracy (Top-1 accuracy) of model $i$. We note that when a job is processed on a model $i$, the resulting Top-1 accuracy, which we refer to by \textit{true accuracy} for the job, can be quite different from the average test accuracy $a_i$. However, since the true accuracies are not known apriori, for analytical tractability, we consider that the accuracy of a model $i$ for any job is $a_i$. Later, in our experimental results (cf. Section~\ref{sec:experiments}), we do present the results with the true accuracies achieved under our algorithm. Without loss of generality, we assume that $a_1 \leq a_2 \leq \ldots \leq a_m$, and also assume that the model $m+1$ is a state-of-the-art model with a higher average test accuracy than the models on the ED, i.e., $a_m \leq a_{m+1}$. 
In the sequel, the term `accuracy' refers to the average test accuracy, unless otherwise specified. 

\subsection{Processing and Communication Times}
The processing time of job $j$ on model $i \in M\backslash\{m+1\}$ is denoted by $p_{ij}$, and on model $m+1$ it is denoted by $p'_{(m+1)j}$. Later, in Section~\ref{sec:identicaljobs}, we consider the special case of identical jobs, that is relevant in applications where the data samples are identical. In several applications, the data samples may need pre-processing before they are input to the ML model. For example, in computer vision tasks, images require pre-processing and the time required for pre-processing varies with the size of the image~\cite{Ross2019}. In our experiments with the images from the ImageNet dataset, the pre-processing stage only involves reshaping the images to input to the DNN models. Let $\tau_{ij}$ denote the pre-processing time of job $j$ on model $i$. We consider the pre-processing times are part of the processing times defined above. 

Let $c_j$ denote the communication time for offloading job $j$. It is determined by the data size of the job, i.e., the size of the data sample in bits, and the data rate of the connection between the ED and the ES. Given $p'_{(m+1)j}$ and $c_j$, the \textit{total time} to process job $j$ on the ES, denoted by $p_{(m+1)j}$, is given by $p_{(m+1)j} = c_j + p'_{(m+1)j}$. 
We deliberately use similar notation for the processing times $p_{ij}$ on the ED and the total times $p_{(m+1)j}$ on the ES because it simplifies the expressions in the sequel.
We consider that the communication times $c_j$ are fixed and are known apriori. 
This is possible in the scenarios where the ED and the ES are connected in a LAN or in a private network with fixed bandwidth. In our experiments, the ED and the ES are connected via our institute's LAN, and the communication times have negligible variance. 
We also consider that the processing times of the jobs are known apriori and that they can be estimated from the historical job executions. 


\subsection{Optimization Problem}
Given the set of jobs $J$ at time zero, the \textit{makespan} is defined as the time when the processing of the last job in $J$ is complete. The objective of the scheduler at the ED is to assign the set of jobs $J$ to the set of models $M$ such that the total accuracy, denote by $A$, is maximized and the makespan is within the time constraint $T$. Note that a schedule involves the partitioning of the set $J$ between the ED and the ES, and the constraint on the makespan implies that the completion time of all the jobs should be with in $T$ on both ED and ES. The above objective is relevant in applications where the ED periodically collects the data samples in a period $T$ and aims to finish their processing within the next period. By choosing a small period $T$, a real-time application can aim for fast ML inference at a reduced total accuracy. 

Let $x_{ij}$ denote a binary variable such that $x_{ij} = 1$, if the scheduler assigns job $j$ to model $i$, and $x_{ij} = 0$, otherwise. Note that, if $x_{(m+1)j} = 1$, then job $j$ is offloaded to the ES.  Therefore, a schedule is determined by the matrix $\mathbf{x} = [x_{ij}]$. We impose the following constraints on $\mathbf{x}$:
\begin{align}
&\sum_{i=1}^m \sum_{j=1}^n p_{ij}x_{ij} \leq T  \label{IP:eq1} \\ 
    & \sum_{j=1}^n p_{(m+1)j}x_{(m+1)j} \leq T  \label{IP:eq2}\\
    & \sum_{i=1}^{m+1}{x_{ij}=1}, \, \forall j \in J\label{IP:eq3}\\
    & x_{i,j} \in \{0 ,1\},\,\forall i \in M, \forall j \in J, \label{IP:eq4}   
\end{align}
where constraints \eqref{IP:eq1} and \eqref{IP:eq2} ensure that the total processing times on the ED and the ES, respectively, are within $T$, and thus the makespan is within $T$. Constraints in \eqref{IP:eq3} imply that each job is assigned to only one model and no job should be left unassigned, and \eqref{IP:eq4} are integer constraints.
We are interested in the following accuracy maximization problem $\P$:
\begin{align}
    &\underset{\mathbf{x}}{\text{maximize}} & & A=\sum_{i=1}^{m+1}\sum_{j=1}^n {a_ix_{ij}}\nonumber\\
    &\text{subject to} && \eqref{IP:eq1}, \eqref{IP:eq2}, \eqref{IP:eq3},  \text{ and } \eqref{IP:eq4}.\nonumber
\end{align}
Note that $\P$ is an ILP. We will show later that a special case of $\P$ reduces to CCKP, which is NP-hard, and thus $\P$ is NP-hard. Let $A^*$ denote the optimal total accuracy for $\P$.
}
\section{Accuracy Maximization using LP-Relaxation and Rounding (AMR$^2$)}\label{sec:AMR2}
In this section, we first present the two key components of \algo: 1) the LP-relaxation of $\P$, and 2) an ILP with two jobs, which we refer by \subilp. Later, we present the details of \algo.
\subsection{LP-Relaxation and sub-ILP}
Given $\P$, we proceed with solving the LP-relaxation of $\P$, where the integer constraints in \eqref{IP:eq4} are replaced using the following non-negative constraints: \begin{align}\label{eq:relax}
x_{ij} \geq 0, \forall i \in M \text{ and } \forall j \in J.  
\end{align}
Note that, the constraints $x_{ij} \leq 1$ are not required as this is ensured by the constraints in \eqref{IP:eq3}. Let the matrix $\bar{\mathbf{x}} = [\bar{x}_{ij}]$ and $A^*_\text{LP}$ denote the schedule and the total accuracy, respectively, output by the LP-relaxation. Note that the LP-relaxed solution provides an upper bound on the total accuracy achieved by an optimal schedule, and thus we have $A^*_\text{LP} \geq A^*$.

Note that the solution to the LP-relaxation may contain $x_{ij}$ values that are fractional, and the rounding procedure is critical to proving the performance bounds. To design a rounding procedure, we first refer to a key result in~\cite{Potts1984}, where the author studied the problem of assigning $N$ jobs to $K$ parallel machines with the objective of minimizing the makespan. For the LP-relaxation of this problem, the author presented the following counting argument: there exists an optimal basic solution in which there can be at most $K-1$ \textit{fractional jobs}, i.e., the jobs that are divided between machines, and all the other jobs are fully assigned. Further,  the simplex algorithm outputs such a basic optimal solution. In our problem, there are two parallel machines, the ED and the ES, but in contrast to~\cite{Potts1984}, the ED has multiple models and the jobs assigned to the ED are processed in sequence. Taking this new aspect into account, we extend the counting argument for the problem at hand and show that solving the LP-relaxation of $\P$ results in at most two fractional jobs. This structural result is stated in the following lemma.
\begin{lemma}\label{lem:2fractionaljobs}
For the LP-relaxation of $\P$, there exists an optimal basic solution with at most two fractional jobs.
\end{lemma}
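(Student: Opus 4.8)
The plan is to establish the bound on fractional jobs by analyzing the structure of a basic feasible solution (a vertex) of the LP-relaxation. I would first set up the counting argument by characterizing the total number of variables and the number of linearly independent tight constraints at a vertex. The LP has $(m+1)n$ variables $\bar{x}_{ij}$. At any basic feasible solution, the number of strictly positive variables is at most the number of linearly independent active constraints, which in turn is bounded by the total number of non-trivial constraints available: the $n$ assignment equalities in \eqref{IP:eq3}, plus the two makespan inequalities \eqref{IP:eq1} and \eqref{IP:eq2}. Hence at most $n+2$ of the variables can be nonzero.

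Next I would translate "nonzero variables" into "fractional jobs." Call job $j$ \emph{fractional} if its assignment is split, i.e.\ more than one $\bar{x}_{ij}$ is positive for that $j$; otherwise the single positive variable in column $j$ must equal $1$ by \eqref{IP:eq3}, so the job is integrally assigned. Each of the $n$ columns contributes at least one positive variable (since the assignment constraint forces the column sum to equal $1$, it cannot be all zeros). A fractional job contributes at least two positive variables in its column. So if there are $f$ fractional jobs, the total count of positive variables is at least $(n-f)\cdot 1 + f\cdot 2 = n + f$. Combining with the upper bound $n+2$ from the vertex argument gives $n + f \le n + 2$, i.e.\ $f \le 2$, which is exactly the claim.

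The main obstacle I anticipate is making the counting bound tight and rigorous rather than loose. The crude bound "number of positive variables $\le$ number of constraints" counts \emph{all} functional constraints, but one must be careful about which constraints are genuinely active and linearly independent at the vertex; in particular the two makespan constraints \eqref{IP:eq1} and \eqref{IP:eq2} need not both be tight, so the honest statement is that the number of positive variables is bounded by the number of tight independent constraints, which is at most $n+2$. I would want to verify that the assignment equalities \eqref{IP:eq3} are always active (they are equalities, so trivially so) and are linearly independent among themselves, giving the baseline of $n$, with the two inequality constraints supplying at most $2$ more. A secondary point to address is the existence claim: I should invoke the standard fact that an LP with a bounded, nonempty feasible region attains its optimum at a vertex (basic feasible solution), so that an optimal basic solution exists to which the counting argument applies; the simplex method, as noted in the reference to \cite{Potts1984}, returns precisely such a solution.

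Finally, I would mirror the structure of the argument in \cite{Potts1984} but adapt it to the two-machine-with-multiple-models setting: there the bound was $K-1$ fractional jobs for $K$ machines via $n$ assignment constraints plus $K$ capacity constraints, and here the analogue with the ED-plus-ES giving effectively two capacity constraints yields the clean bound of two. I expect the write-up to be short once the vertex/active-constraint counting is set up correctly, with the only real care needed in the column-counting step that converts the variable count into the fractional-job count.
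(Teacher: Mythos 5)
Your proposal is correct and follows essentially the same counting argument as the paper: bound the number of positive variables at an optimal basic solution by the $n+2$ functional constraints, note that every job's assignment column contributes at least one positive variable while a fractional job contributes at least two, and conclude $f\le 2$. Your write-up is in fact slightly more careful than the paper's on the points of linear independence of active constraints and the existence of an optimal vertex, but the underlying idea is identical.
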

\begin{proof}
Since LP-relaxation of $\P$ has $n+2$ constraints, apart from the non-negative constraints in~\eqref{eq:relax}, one can show using LP theory that there exists an optimal basic solution with $n+2$ basic variables that may take positive values and all the non-basic variables take value zero. Under such an optimal basic solution, for the $n$ constraints in~\eqref{IP:eq3} to be satisfied, at least one positive basic variable should belong to each of those $n$ constraints. The remaining $2$ basic variables may belong to at most two equations. This implies that at least $n-2$ equations should have exactly one positive basic variable whose value should be one in order to satisfy the constraint. Therefore, there can be at most two equations with multiple basic variables whose values are in $(0,1)$, and the two jobs that correspond to these equations are the fractional jobs.
\end{proof}

Given the basic optimal solution to the LP-relaxation, the result in Lemma~\ref{lem:2fractionaljobs} reduces the rounding procedure to assigning at most two fractional jobs. Without loss of generality, we re-index the jobs and refer to the fractional jobs by job $1$ and job $2$. We define the set $I = J \backslash \{1,2\}$ and refer to the assignment of $I$ under $\bar{\mathbf{x}}$ as the \textit{integer solution of the LP-relaxation}. We formulate the following ILP, which we refer to by sub-ILP, for computing the assignments for jobs $1$ and $2$. 
\begin{equation}\label{subILP}
\begin{aligned}
    &\underset{x_{i1},x_{i2}}{\text{maximize}} &&\sum_{i=1}^{m+1}{a_i(x_{i1}+x_{i2})} \\
    &\text{subject to} &&  \sum_{i=1}^m p_{i1}x_{i1}+p_{i2}x_{i2} \leq T \\ 
    &&& p_{(m+1)1}\,x_{(m+1)1} + p_{(m+1)2}\,x_{(m+1)2}  \leq T  \\
    &&& \sum_{i=1}^{m+1}{x_{ij}=1}, \, j \in \{1,2\}\\
    &&& x_{ij} \in \{0 ,1\}, \,\forall i \in M, \forall j \in J.
\end{aligned}
\end{equation}
Later, in Section~\ref{sec:analysis}, we will see that the \subilp in \eqref{subILP} is crucial to proving performance guarantees for \algo.

\subsection{$AMR^2$ Description}
The main steps of \algo are summarized in Algorithm~\ref{method:AMR2}. We use $\mathbf{x}^\dagger$ and $A^\dagger$ to denote the schedule and the total accuracy, respectively, output by \algo. The schedule $\mathbf{x}^\dagger$ comprises the integer solution of the LP-relaxation and the assignment of the fractional jobs that are obtained for the cases of one fractional job and two fractions jobs in line 4 and line 7, respectively, in Algorithm \ref{method:AMR2}. 
\begin{algorithm}[ht!]
\begin{algorithmic}[1]
\caption{$AMR^2$}
\label{method:AMR2}
\STATE \textbf{Input}: $p_{ij}$, for all $i \in M$ and $j \in J$.
\STATE Solve  the LP-relaxation of $\P$.
\IF{One fractional job}
\STATE Assign job $1$ to model $\argmax_{i\in M}\{a_i : p_{i1} \leq T\}$.
\ENDIF
\IF{Two fractional jobs}
\STATE Solve the \subilp in~\eqref{subILP} using Algorithm \ref{algorithm:sub_ilp}.
\ENDIF
\STATE \textbf{Output}: Assignment matrix $\mathbf{x}^\dagger$ and total accuracy $A^\dagger$ 
\end{algorithmic}
\end{algorithm}

The LP-relaxation is solvable in polynomial time. To solve the \subilp we consider different cases based on the processing times of jobs 1 and 2, and greedily pack them to maximize accuracy subject to the constraint $T$. 
The steps for solving the \subilp are presented in Algorithm~\ref{algorithm:sub_ilp}. There are two main cases. For the case where processing time of at least one of the jobs is at most $T$ on the ES, we assign at least one job to the ES. This case is presented in line $2$ of Algorithm~\ref{algorithm:sub_ilp} and is satisfied for problem instances where $p_{ij} \leq T$, for all $i \in M$ and $j \in J$. For the case where processing times of both the jobs are greater than $T$, we schedule both of them on the ED. This case is presented in line 12 of Algorithm~\ref{algorithm:sub_ilp}. In line 13, we use enumeration for finding the models $i'$ and $i''$.
\begin{algorithm}[ht!]
\caption{Algorithm for solving \subilp}
\label{algorithm:sub_ilp}
\begin{algorithmic}[1]
\STATE \textbf{Input}: $p_{i1}$ and $p_{i2}$, for all $i \in M$
\IF{$p_{(m+1)1} \leq$ T \OR $p_{(m+1)2} \leq T$}
    \IF{$p_{(m+1)1}+p_{(m+1)2}\leq T$}
    \STATE Assign both jobs to model $m+1$.
    \ELSE 
    \IF{$\max\{a_i:p_{i1}\leq T\} \geq \max\{a_i:p_{i2}\leq T\}$}
    \STATE Assign job $1$ to model $\argmax_{i}\{a_i:p_{i1}\leq T\}$ and job $2$ to the ES.
    \ELSE
    \STATE Assign job $2$ to model $\argmax_{i}\{a_i:p_{i2}\leq T\}$ and job $1$ to the ES.
    \ENDIF
    \ENDIF
\ELSIF{$p_{(m+1)1} > T$ \AND $p_{(m+1)2} > T$}
\STATE Assign job $1$ to $i'$ and job $2$ to $i''$, where models $i'$ and $i''$ are on the ES such that, $p_{i'1}+p_{i''2}\leq T$ and $a_{i'} + a_{i''}$ is the maximum.
\ENDIF
\STATE \textbf{Output}: Assignment for jobs 1 and 2. 
\end{algorithmic}
\end{algorithm}

\subsubsection*{Computational complexity}
The computational complexity of solving an LP with $l$ variables is $O(l^3)$ (cf.\cite{Jan2020}). In the LP-relaxation, the number of variables are $n(m+1)$, and thus its runtime is $O(n^3(m+1)^3)$. Solving \subilp using Algorithm~\ref{algorithm:sub_ilp} requires $m^2$ iterations due to the step in line $14$, where we enumerate $m$ choices for each job. Therefore, the runtime of \algo is $O(n^3(m+1)^3)$.


\section{Analysis of \algo}\label{sec:analysis}
 In this section, we analyse \algo and present a $2T$ bound for its makespan and its total accuracy is at most $2(a_{m+1}-a_1)$ lower than the optimal accuracy. 

We designed Algorithm~\ref{algorithm:sub_ilp} for computing an optimal solution, with makespan at most $T$, for the \subilp by considering all possible cases. From its description one can easily verify that it is indeed optimal for \subilp. We state this in the following lemma without proof.
\begin{lemma}\label{lem:Algo2Opt}
Algorithm \ref{algorithm:sub_ilp} is an optimal algorithm for the \subilp.
\end{lemma}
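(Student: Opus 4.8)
The plan is to prove optimality by a direct case analysis that mirrors the branches of Algorithm~\ref{algorithm:sub_ilp}. Because \subilp involves only the two fractional jobs, and each must be assigned to exactly one of the $m+1$ models, every feasible assignment falls into one of three types according to how jobs $1$ and $2$ are split across the two machines: both on the ES, both on the ED, or one on each. For each branch of the algorithm I would (a) verify that the returned assignment is feasible, i.e. that the ED load and the ES load are each at most $T$, and (b) show that no feasible assignment attains a strictly larger objective $\sum_{i}a_i(x_{i1}+x_{i2})$.

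The two extreme branches are quick. When $p_{(m+1)1}+p_{(m+1)2}\le T$, assigning both jobs to the ES gives objective $2a_{m+1}$; since $a_i\le a_{m+1}$ for every $i$, the objective of \subilp can never exceed $2a_{m+1}$, so this assignment is optimal. When $p_{(m+1)1}>T$ and $p_{(m+1)2}>T$, neither job can meet the ES constraint, so every feasible assignment must place both jobs on the ED; the algorithm then enumerates all $m^2$ model pairs $(i',i'')$ with $p_{i'1}+p_{i''2}\le T$ and selects one maximizing $a_{i'}+a_{i''}$, which is optimal by exhaustion.

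The crux is the mixed branch, where at least one job fits on the ES but the two together do not. Here I would first establish a \emph{domination claim}: any feasible assignment that places both jobs on the ED is weakly dominated by one that offloads a single job. Indeed, if both jobs sit on the ED with models $i_1,i_2$ satisfying $p_{i_1 1}+p_{i_2 2}\le T$, then (since this forces $p_{i_2 2}\le T$) moving job $1$ to the ES replaces $a_{i_1}$ by $a_{m+1}\ge a_{i_1}$ while job $2$ can still attain at least $a_{i_2}$ on the freed-up ED, and symmetrically for moving job $2$. Hence an optimal solution may be sought among the at most two configurations with exactly one offloaded job; each has objective $a_{m+1}$ plus the best attainable ED accuracy $\max\{a_i:p_{ij}\le T\}$ of the retained job $j$, and the comparison in Algorithm~\ref{algorithm:sub_ilp} keeps on the ED whichever job has the larger such value, thereby maximizing the objective.

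The step I expect to demand the most care is reconciling this exchange argument with feasibility of the offload. Moving job $j$ to the ES is admissible only when $p_{(m+1)j}\le T$, whereas the comparison that decides which job to retain is stated purely in terms of ED accuracies. I would therefore split the mixed branch into the sub-case where both $p_{(m+1)1}\le T$ and $p_{(m+1)2}\le T$, in which both offloads are feasible and the unconstrained comparison is valid, and the sub-case where exactly one ES time is at most $T$, in which the admissible offload is forced; in the latter I must check that the forced configuration still weakly dominates every both-on-ED assignment. Assembling the three branches, together with the feasibility checks, then yields that Algorithm~\ref{algorithm:sub_ilp} always returns a feasible assignment of maximum objective, establishing Lemma~\ref{lem:Algo2Opt}.
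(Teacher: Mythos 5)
Your argument is correct, and it is in fact the argument the paper does not write down: the paper states Lemma~\ref{lem:Algo2Opt} without proof, remarking only that optimality "can easily be verified" from the case structure of Algorithm~\ref{algorithm:sub_ilp}, so your proposal supplies exactly the missing verification. The two extreme branches are handled as you say (the objective is globally capped by $2a_{m+1}$, and exhaustive enumeration over the $m^2$ ED pairs is trivially optimal when neither job fits on the ES), and your exchange/domination claim is the one genuinely non-trivial ingredient: it correctly uses $p_{i_2 2}\le p_{i_1 1}+p_{i_2 2}\le T$ to guarantee the retained job remains feasible on the ED after the other is offloaded, which rules out any both-on-ED assignment beating a single-offload one in the mixed branch. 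Your final concern about feasibility of the offload is well placed and is where a careless reading fails: as literally written, the comparison in lines 6--10 of Algorithm~\ref{algorithm:sub_ilp} is phrased in terms of ED accuracies only and could, in the sub-case $p_{(m+1)1}>T\ge p_{(m+1)2}$, nominally direct job $1$ to the ES; the algorithm must be read as offloading only a job $j$ with $p_{(m+1)j}\le T$ (and taking the maxima over the ED models $i\le m$, with the convention that an empty set forces the other choice). Under that reading, your symmetric version of the domination claim already shows the forced configuration dominates every both-on-ED assignment, so the "check" you defer goes through immediately and the case analysis is complete.
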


In the following theorem, we state the bound for the makespan under \algo.
\begin{theorem}\label{theorem:makespan2T}
If $\P$ is feasible, then the makespan of the system under \algo is at most 2T. 
\end{theorem}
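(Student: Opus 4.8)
The plan is to bound the completion time on each of the two parallel machines (the ED and the ES) separately by $2T$; since the two machines run concurrently, the makespan equals the maximum of their loads, so this immediately yields the claim. I would decompose the schedule $\mathbf{x}^\dagger$ into the two pieces that \algo actually assembles: the integer part inherited from the LP-relaxation, i.e.\ the jobs in $I = J \setminus \{1,2\}$, and the rounded assignment of the at most two fractional jobs.

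First I would argue that the integer part alone already respects the bound $T$ on each machine. Because $\bar{\mathbf{x}}$ satisfies constraints \eqref{IP:eq1} and \eqref{IP:eq2}, and the processing-time contributions of the fractional jobs $1$ and $2$ are nonnegative, restricting to $I$ can only lower each machine's load. Explicitly, $\sum_{i=1}^m \sum_{j \in I} p_{ij}\bar{x}_{ij} \leq \sum_{i=1}^m \sum_{j=1}^n p_{ij}\bar{x}_{ij} \leq T$ for the ED, and $\sum_{j \in I} p_{(m+1)j}\bar{x}_{(m+1)j} \leq T$ for the ES. Hence the integer solution of the LP-relaxation contributes at most $T$ to each machine.

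Next I would show the rounding step adds at most $T$ more to each machine. In the one-fractional-job case (line~4), feasibility of $\P$ ensures the set $\{i : p_{i1} \leq T\}$ is nonempty, so the $\argmax$ is well-defined; job $1$ is then placed on a single machine with per-job processing time at most $T$, adding at most $T$ to that machine and $0$ to the other. In the two-fractional-jobs case (line~7), Algorithm~\ref{algorithm:sub_ilp} returns a feasible optimum of the \subilp by Lemma~\ref{lem:Algo2Opt}, and the \subilp constraints cap the load of jobs $1$ and $2$ by $T$ on the ED and by $T$ on the ES. The point to check here is that the \subilp is feasible so that Algorithm~\ref{algorithm:sub_ilp} returns a valid assignment: any feasible schedule for $\P$ restricts to a feasible assignment of jobs $1$ and $2$, since deleting the remaining jobs only frees time on each machine, and $\P$ is feasible by hypothesis.

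Combining the two pieces, each machine carries a load of at most $T + T = 2T$, so the makespan is at most $2T$. I expect the only genuinely delicate point to be the feasibility bookkeeping for the \subilp together with the well-definedness of the $\argmax$ in line~4; both reduce to the assumption that $\P$ is feasible, which guarantees every fractional job can be placed on some machine with per-job processing time at most $T$. Once the additive two-piece structure of $\mathbf{x}^\dagger$ is made explicit, the remainder is routine.
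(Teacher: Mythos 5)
Your proof is correct and follows essentially the same route as the paper: decompose $\mathbf{x}^\dagger$ into the integer part of the LP-relaxed solution (load at most $T$ on each machine, since dropping the fractional jobs only decreases the load) and the rounded fractional jobs (load at most $T$ by feasibility of the \subilp and Lemma~\ref{lem:Algo2Opt}), then add the two contributions. Your treatment is somewhat more explicit than the paper's---in particular you spell out the well-definedness of the $\argmax$ in the one-fractional-job case---but the underlying argument is identical.
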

\begin{proof}
Given that $\P$ is feasible, the LP-relaxation of $\P$ is also feasible. Since the makespan under the LP-relaxation solution is at most $T$, the makespan of its integer solution cannot exceed $T$. Again, the feasibility fo $\P$ implies that the \subilp is feasible as it is a sub problem of $\P$ involving only two jobs from $J$. Using this and Lemma~\ref{lem:Algo2Opt} we infer that Algorithm~\ref{algorithm:sub_ilp} always finds an assignment for jobs $1$ and $2$ such that the makespan for these two jobs is at most T. Now, $\mathbf{x}^\dagger$ comprises the integer solution of the LP-relaxation, with makespan at most $T$, and the assignment of the fractional jobs in both the cases of one fractional job and two fractional jobs, which also have makespans of at most $T$. Therefore, \algo has a makespan at most 2T.
\end{proof}

\begin{theorem}\label{theorem:accuracy_constant}
The total accuracy achieved by an optimal schedule is at most $2(a_{m+1}-a_1)$ higher than the total accuracy achieved by \algo, i.e., $A^* \leq A^\dagger + 2(a_{m+1}-a_1)$. 
\end{theorem}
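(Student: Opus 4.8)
The plan is to exploit the fact that \algo modifies the LP-relaxation solution $\bar{\mathbf{x}}$ only on the (at most two) fractional jobs, leaving the integer part of $\bar{\mathbf{x}}$ untouched, and to use $A^*_\text{LP} \geq A^*$. So it suffices to bound $A^*_\text{LP} - A^\dagger$, and the whole theorem will reduce to controlling the accuracy lost on the fractional jobs.

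First I would fix the optimal basic solution $\bar{\mathbf{x}}$ guaranteed by Lemma~\ref{lem:2fractionaljobs} and split its accuracy as $A^*_\text{LP} = A_I + A^\text{LP}_f$, where $A_I = \sum_{i}\sum_{j \in I} a_i \bar{x}_{ij}$ is the contribution of the fully-assigned jobs $I = J \backslash \{1,2\}$ and $A^\text{LP}_f$ is the (possibly fractional) contribution of the fractional jobs. Because \algo retains \emph{exactly} this integer assignment for the jobs in $I$, its accuracy decomposes as $A^\dagger = A_I + A^\dagger_f$ with the same $A_I$. Hence $A^* - A^\dagger \leq A^*_\text{LP} - A^\dagger = A^\text{LP}_f - A^\dagger_f$.

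Next I would bound the two fractional terms separately using the ordering $a_1 \leq \dots \leq a_{m+1}$ and the assignment constraints $\sum_i \bar{x}_{ij} = 1$. For the upper bound, each fractional job $j$ satisfies $\sum_i a_i \bar{x}_{ij} \leq a_{m+1}\sum_i \bar{x}_{ij} = a_{m+1}$, so $A^\text{LP}_f \leq 2a_{m+1}$ in the two-job case (and $\leq a_{m+1}$ in the one-job case). For the lower bound, I would argue that \algo gives each rounded job accuracy at least $a_1$: in the one-job case (line~4), feasibility of $\P$ guarantees some model $i$ has $p_{i1} \leq T$, so the chosen model has accuracy $\geq a_1$; in the two-job case, feasibility of $\P$ makes the \subilp feasible, and by Lemma~\ref{lem:Algo2Opt} Algorithm~\ref{algorithm:sub_ilp} returns an optimal (hence feasible) integral assignment placing each job on a single model of accuracy $\geq a_1$, giving $A^\dagger_f \geq 2a_1$. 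Combining, $A^\text{LP}_f - A^\dagger_f \leq 2(a_{m+1}-a_1)$, while the one-fractional-job and zero-fractional-job cases only sharpen the bound.

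The argument is largely bookkeeping, so I expect no serious obstacle; the point requiring the most care is justifying that the integer part $A_I$ is genuinely identical in $\bar{\mathbf{x}}$ and $\mathbf{x}^\dagger$, which is precisely where Lemma~\ref{lem:2fractionaljobs} is indispensable, since it certifies that all non-fractional jobs are fully assigned in the basic solution and therefore carried over verbatim. A secondary detail is verifying the existence of a single feasible model for a fractional job, which follows because in any feasible schedule of $\P$ each job already sits integrally on some machine whose total load is at most $T$.
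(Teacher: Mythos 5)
Your proposal is correct and follows essentially the same route as the paper: bound $A^* \leq A^*_\text{LP}$, note that \algo preserves the integer part of the basic LP solution, bound the fractional jobs' LP contribution above by $2a_{m+1}$ and \algo's contribution for those jobs below by $2a_1$. The only difference is that you replace the paper's explicit case analysis on Algorithm~\ref{algorithm:sub_ilp}'s output (both jobs to the ES, one to each, both to the ED) with the uniform lower bound $A^\dagger_f \geq 2a_1$, which is a harmless streamlining for the theorem itself, though the paper's finer cases are what later yield the sharper bound of Corollary~\ref{cor:accuracyBound}.
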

\begin{proof}
We prove that $A^*_\text{LP} \leq A^\dagger + 2(a_{m+1}-a_1)$ in the worst-case, and the result follows from the relation $A^*_\text{LP} \geq A^*$. If there are no fractional jobs in the LP-relaxed solution, then the result is true since in this case $A^*_\text{LP} = A^\dagger$. If there is one fractional job, which we refer by job $1$, then the difference between $A^*_\text{LP}$ and $A^\dagger$ is caused by reassigning job $1$ by \algo (cf. line 4 Algorithm~\ref{method:AMR2}). Since job $1$ can contribute at most $a_{m+1}$ to $A^*_\text{LP}$ and its reassignment by \algo results in at least a contribution of $a_1$ to $A^\dagger$, we obtain $A^*_\text{LP} - A^\dagger \leq a_{m+1} - a_1$. 

When there are two fractional jobs, for some models $i$, $\hat{i}$, $k$ and $\hat{k}$, it should be true that $\bar{x}_{i1} + \bar{x}_{\hat{i}1} = 1$ and $\bar{x}_{k2} + \bar{x}_{\hat{k}2} = 1$, meaning that job $1$ is divided between models $i$ and $\hat{i}$ and job $2$ is divided between models $k$ and $\hat{k}$ under the LP-relaxed solution $\bar{\mathbf{x}}$. Let $A^*_{\text{LP,I}}$ denote the total accuracy for the set $I = J\backslash\{1,2\}$ under $\bar{\mathbf{x}}$. 
We have,
\begin{align}
A^*_\text{LP} &= A^*_\text{LP,I} + a_{i}\Bar{x}_{i1} + a_{\hat{i}}\Bar{x}_{\hat{i}1} + a_{k}\Bar{x}_{k2}+a_{\hat{k}}\Bar{x}_{\hat{k}2} \label{lp_accuracy1}\\
&\leq A^*_\text{LP,I} + 2a_{m+1}. \label{lp_accuracy2}
\end{align}
In the last step above, we used $a_{m+1} \geq a_i$, for all $i \in M$. 
We now consider different cases based on the job assignments output by the Algorithm~\ref{algorithm:sub_ilp}.

\textbf{Case 1:} Both jobs are offloaded to model $m+1$. Since \algo comprises the integer solution solution of the LP-relaxation, i.e., the schedule of $I$ under $\bar{\mathbf{x}}$, and the assignment from the Algorithm~\ref{algorithm:sub_ilp}, we have $A^\dagger = A^*_\text{LP,I} + 2a_{m+1}$. Using~\eqref{lp_accuracy2}, we obtain $A^\dagger \geq A^*_\text{LP}$.
Note that $A^\dagger$ could exceed $A^*_\text{LP}$ because the makespan of $\mathbf{x}^\dagger$ can exceed $T$ in contrast to $\bar{\mathbf{x}}$.

\textbf{Case 2:} One job is offloaded on model $m+1$ and one job is assigned to model $\hat{i}\in J \backslash \{m+1\}$ on the ED. This implies, the offloaded to the ES contributes $a_{m+1}$ and the other job contributes at least $a_1$ to to the total accuracy $A^\dagger$. Therefore, we have,
\begin{align*}
    A^\dagger \geq A^*_\text{LP, I} + a_{m+1} + a_1 \geq  A^*_\text{LP} -(a_{m+1} - a_1).
\end{align*}
In the second step above we used \eqref{lp_accuracy2}. 

\textbf{Case 3:} Both jobs are assigned to the ED. This case occurs when $p_{(m+1)1} > T$ and $p_{(m+1)2} > T$, and we have
\begin{align}\label{eq:i'i''}
    A^\dagger \geq A^*_\text{LP,I} + 2a_1.
\end{align}
From~\eqref{eq:i'i''} and \eqref{lp_accuracy2}, we obtain $A^*_\text{LP}-A^\dagger \leq 2(a_{m+1}-a_1)$.

We obtain the result by taking the worst case bounds in all the three cases.
\end{proof}
Since $2(a_{m+1}-a_1) \leq 2$, the difference between $A^*$ and $A^\dagger$ becomes negligible as the number of jobs $n$ increases. It is worth noting that the worst-case bound $2(a_{m+1}-a_1)$ results from \textbf{Case 3} in the proof of Theorem~\ref{theorem:accuracy_constant}. While this case, where the processing times of jobs $1$ and $2$ are greater than $T$, can happen when $T$ is small, typical problem instances have processing times less than $T$ on the ES. Note that a $T$ value that is smaller than the processing time of a job either becomes infeasible or results in very low optimal total accuracy. We also assert this in our experimental results. In the following corollary we present the worst-case bound for a case that is true for typical problem instances.
\begin{corollary}\label{cor:accuracyBound}
If the processing times of all jobs on the ES are at most $T$, then $A^* \leq A^\dagger + a_{m+1} - a_1$.
\end{corollary}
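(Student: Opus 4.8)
The plan is to specialize the three-case analysis in the proof of Theorem~\ref{theorem:accuracy_constant}, observing that the extra hypothesis removes precisely the case responsible for the factor of two. The starting point is the same relation $A^*_\text{LP} \geq A^*$, so it suffices to establish $A^*_\text{LP} \leq A^\dagger + (a_{m+1} - a_1)$ under the assumption that $p_{(m+1)j} \leq T$ for every $j \in J$.

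First I would note that the hypothesis forces the fractional jobs, which we again index by $1$ and $2$, to satisfy $p_{(m+1)1} \leq T$ and $p_{(m+1)2} \leq T$. Consequently the guard in line~$2$ of Algorithm~\ref{algorithm:sub_ilp} is satisfied, so Algorithm~\ref{algorithm:sub_ilp} never enters the branch in line~$12$. In the language of the proof of Theorem~\ref{theorem:accuracy_constant}, this means \textbf{Case 3}, which is the only case producing the bound $2(a_{m+1}-a_1)$ and which explicitly requires $p_{(m+1)1} > T$ and $p_{(m+1)2} > T$, can no longer occur.

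Next I would handle the remaining scenarios exactly as before. If the LP-relaxed solution has no fractional jobs then $A^\dagger = A^*_\text{LP}$; if it has a single fractional job then line~$4$ of Algorithm~\ref{method:AMR2} reassigns it and the argument in Theorem~\ref{theorem:accuracy_constant} gives $A^*_\text{LP} - A^\dagger \leq a_{m+1} - a_1$. If there are two fractional jobs, only \textbf{Case 1} and \textbf{Case 2} survive: \textbf{Case 1} yields $A^\dagger \geq A^*_\text{LP}$, so the bound holds trivially since $a_{m+1} \geq a_1$, while \textbf{Case 2} again yields $A^*_\text{LP} - A^\dagger \leq a_{m+1} - a_1$. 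Taking the worst of these surviving cases gives $A^*_\text{LP} - A^\dagger \leq a_{m+1} - a_1$, and combining with $A^*_\text{LP} \geq A^*$ finishes the proof.

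I do not expect a genuine obstacle here, since the statement is a direct pruning of the earlier case analysis; the only point that needs care is justifying that the hypothesis really does eliminate the both-jobs-on-ED branch of Algorithm~\ref{algorithm:sub_ilp} for the fractional jobs, rather than merely for an arbitrary job. This follows because Lemma~\ref{lem:2fractionaljobs} guarantees at most two fractional jobs, and the hypothesis bounds $p_{(m+1)j}$ uniformly over all $j \in J$, in particular over those two.
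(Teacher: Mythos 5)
Your proposal is correct and follows essentially the same route as the paper: it eliminates \textbf{Case 3} of the proof of Theorem~\ref{theorem:accuracy_constant} using the hypothesis and takes the worst bound over the remaining cases, which is $a_{m+1}-a_1$. The extra care you take in verifying that the hypothesis applies to the two fractional jobs in particular is a sensible (if implicit in the paper) detail, but does not change the argument.
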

\begin{proof}
Since processing times of all jobs on the ES are at most $T$, \textbf{Case 3} in the proof of Theorem~\ref{theorem:accuracy_constant} cannot happen. In all other cases, the worst-case bound we have is $A^*_\text{LP} \leq A^\dagger + a_{m+1} - a_1$, and the result is true since $A^* \leq A^*_\text{LP}$.
\end{proof}

\textbf{\textit{Remark:}} The schedule $\mathbf{x}^\dagger$ given by \algo may result in a makespan greater $T$. As noted before, a special case of our problem is GAP for which the best known approximation algorithm, proposed in~\cite{Shmoys1993}, has the makespan bound $2T$ and produces a schedule that may exceed $T$. In our experimental results, we show that the percentage of violation on an average is at most $40$\% for the considered image classification application. 



{\allowdisplaybreaks
\section{Identical Jobs}
In this section, we consider the problem $\PI$, a special case of $\P$ where the jobs are identical, i.e., $p_{ij} = p_{i}$, for all models $i \in M$. We present Accuracy Maximization using Dynamic Programming (AMDP) for $\PI$. The formulation for $\PI$ is given below.
\begin{align}
& \underset{\mathbf{x}}{\text{maximize}} & & \sum_{i=1}^{m+1}a_i \sum_{j=1}^{n}x_{ij} \nonumber \\
& \text{subject to} & & \sum_{i = 1}^m p_i \sum_{j=1}^n x_{ij} \leq T 	\label{constraint01}\\
& & &p_{m+1} \sum_{j \in 1}^n x_{(m+1)j} \leq T \label{constraint02}\\
& & & \sum_{i = 1}^{m+1} x_{ij} = 1, \quad \forall j \in J \label{constraint03}\\
& & & x_{ij} \in \{0,1\}, \quad \forall i \in M, \forall j \in J. \label{constraint04}
\end{align}
Next, we exploit the structure of $\PI$ and reduce it to solving a Cardinality Constrained Knapsack Problem (CCKP).  

\subsection{CCKP}\label{sec:identicaljobs}
Our first observation is that the number of jobs assigned to the ES under an optimal schedule is given by $n_c = \floor{\frac{T}{p_{m+1}}}$. To see this, assigning number of jobs less than $n_c$ can only reduce the accuracy as the ES provides highest accuracy, and no more than $n_c$ can be assigned due to constraint~\eqref{constraint02}. We present this observation in the following lemma.
\begin{lemma}\label{lem:opt_cloud_sched}
Under an optimal schedule, the number of jobs assigned to the ES is given by $ n_c =  \floor{\frac{T}{p_{m+1}}}$. 
\end{lemma}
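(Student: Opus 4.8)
The plan is to prove the claim by a short exchange (swap) argument, establishing two directions: no feasible schedule can place more than $n_c$ jobs on the ES, and some optimal schedule places exactly $n_c$ jobs there. Throughout I assume $\PI$ is feasible, so that an optimal schedule exists, and I work in the operative regime $n \ge n_c$ in which offloading is nontrivial.

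First I would establish the upper bound. Let $k = \sum_{j=1}^n x_{(m+1)j}$ denote the number of jobs assigned to the ES under any feasible schedule. Constraint~\eqref{constraint02} reads $p_{m+1}\,k \le T$, hence $k \le T/p_{m+1}$, and since $k$ is an integer this gives $k \le \floor{T/p_{m+1}} = n_c$. Thus the ES holds at most $n_c$ jobs, and in particular the optimum places no more than $n_c$ there.

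Next I would show that some optimal schedule attains $k = n_c$. Suppose for contradiction that an optimal schedule has $k < n_c$ jobs on the ES. By~\eqref{constraint03} the remaining $n-k$ jobs sit on the ED, and since $k < n_c \le n$ we have $n-k \ge 1$; pick one such job, assigned to some model $i \le m$, and reassign it to model $m+1$. Feasibility is preserved: the ED load in~\eqref{constraint01} only decreases, and on the ES the count rises to $k+1 \le n_c$, so using $n_c = \floor{T/p_{m+1}} \le T/p_{m+1}$ we get $p_{m+1}(k+1) \le p_{m+1}n_c \le T$, satisfying~\eqref{constraint02}. The total accuracy changes by $a_{m+1} - a_i \ge 0$, since $a_{m+1}$ is the largest accuracy, so the new schedule is feasible and at least as good. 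Iterating the swap until the ES carries $n_c$ jobs produces an optimal schedule with exactly $n_c$ jobs offloaded.

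The core argument is elementary, so I expect the only real difficulty to be the boundary cases rather than the swap itself. The principal one is $n < n_c$: there are simply not enough jobs to fill the ES to capacity, the optimum offloads all $n$ of them, and the count is $n$ rather than $n_c$; I would handle this either by stating the operative assumption $n \ge n_c$ explicitly or by reading $n_c$ as $\min\{n,\floor{T/p_{m+1}}\}$. A secondary point deserving a remark is that if $a_{m+1} = a_m$ the swap leaves the objective unchanged, so optimal schedules with fewer than $n_c$ offloaded jobs may coexist; the lemma should therefore be understood as asserting the \emph{existence} of an optimal schedule with $n_c$ jobs on the ES, which is exactly what the subsequent reduction to CCKP requires.
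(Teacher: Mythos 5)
Your proof is correct and takes essentially the same approach as the paper, which justifies the lemma informally in the text just before it: the ES time constraint caps the offloaded count at $n_c$, and assigning fewer than $n_c$ jobs can only reduce accuracy since the ES model has the highest accuracy. Your exchange argument simply makes that reasoning rigorous, and your two boundary remarks (the case $n < n_c$, and ties $a_{m+1} = a_m$ meaning the lemma asserts existence of such an optimal schedule) correctly flag imprecision in the statement that the paper glosses over.
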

We define $n_l = n - n_c$. Since the jobs are identical, without loss of generality, we assign the last $n_c$ jobs to the ES. 
We are now only required to compute the optimal assignment for jobs $j \in \{1,\ldots,n_l\}$ to the models $1$ to $m$ on the edge device. Thus, given Lemma~\ref{lem:opt_cloud_sched}, solving $\PI$ is reduced to solving the following problem $\PI'$:
\begin{align}
& \underset{\mathbf{x}}{\text{maximize}} & & \sum_{i=1}^{m}a_i \sum_{j=1}^{n_l}x_{ij} \nonumber \\
& \text{subject to} & & \sum_{i = 1}^m p_i \sum_{j=1}^{n_l} x_{ij} \leq T, 	\label{constraint11}\\
& & & \sum_{i = 1}^{m} x_{ij} = 1, \quad \forall j \in \{1,\ldots,n_l\}   \label{constraint12}\\
& & & x_{ij} \!\in \!\{0,1\}, \, \forall i \in M \backslash \{m+1\}, \forall j \in \{1,\ldots,n_l\}. \nonumber
\end{align}
We do a variable change to formulate the CCKP. Let $r$ denote an index taking values from $\{1,\ldots,mn_l\}$. We define new variables $z_r$, accuracies $\bar{a}_r$, and processing times $\bar{p}_r$ as follows: for $i \in M \backslash \{m+1\}$ and $j \in \{1,\ldots,n_l\}$,
\begin{align*}
& z_r = \{x_{ij}: r = j + (i-1)n_l\},\\
&\bar{a}_r = a_i, \, (i-1)n_l \leq r < in_l, \\
&\bar{p}_r = p_i, \, (i-1)n_l \leq r < in_l.
\end{align*}
The CCKP using $\{z_r:1\leq r \leq m n_l\}$ as the decision variables is stated below.
\begin{align}
& \underset{\{z_r\}}{\text{maximize}} & & \sum_{i=1}^{mn_l}\bar{a}_r z_r \nonumber \\
& \text{subject to} & & \sum_{r = 1}^{mn_l} \bar{p}_r z_r \leq T, 	\label{constraint21}\\
& & & \sum_{r = 1}^{mn_l} z_{r} = n_l,   \label{constraint22}\\
& & & z_{r} \!\in \!\{0,1\}, \, \forall r \in \{1,\ldots,mn_l\}. \label{constraint23}
\end{align}
Let $\{z^*_r:1\leq r \leq mn_l\}$ denote an optimal solution for CCKP. The CCKP can be interpreted as follows. We have $mn_l$ items, where each item represents a model and there are $n_l$ copies of the same model. Since the jobs are identical, the problem reduces to the number of times a model is selected, equivalent to the number of jobs assigned to it, such that all jobs are assigned. In the following lemma we state that solving CCKP results in an optimal solution for $\PI'$.
\begin{lemma}\label{lem:equivalence}
The solution $x^*_{ij} = \{z^*_r: r = j + (i-1)n_l\}$
is an optimal solution for $\PI'$. 
\end{lemma}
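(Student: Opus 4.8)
The plan is to prove the lemma by establishing that $\PI'$ and the CCKP are two encodings of the same combinatorial problem, and that the index map $r = j + (i-1)n_l$ transports an optimal solution of one to an optimal solution of the other. First I would verify that $(i,j) \mapsto r = j+(i-1)n_l$ is a bijection from $\{1,\dots,m\}\times\{1,\dots,n_l\}$ onto $\{1,\dots,mn_l\}$, so that the variable sets $\{x_{ij}\}$ and $\{z_r\}$ are in one-to-one correspondence. Because $\bar a_r = a_i$ and $\bar p_r = p_i$ for the index $r$ associated with model $i$, the CCKP objective $\sum_r \bar a_r z_r$ and the knapsack constraint \eqref{constraint21} coincide term-by-term with the $\PI'$ objective and constraint \eqref{constraint11}. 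The only structural difference is that $\PI'$ carries the $n_l$ per-job equalities \eqref{constraint12}, whereas the CCKP replaces them by the single cardinality equality \eqref{constraint22}.

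With this correspondence the easy inclusion is immediate: summing \eqref{constraint12} over $j$ gives $\sum_{i,j} x_{ij} = n_l$, i.e.\ \eqref{constraint22}, so every feasible point of $\PI'$ maps to a feasible point of the CCKP of equal objective value; hence the optimum of $\PI'$ is at most the optimum of the CCKP. The reverse inequality is where the real work lies and is the step I expect to be the main obstacle. The cardinality constraint \eqref{constraint22} is strictly weaker than the per-job constraints \eqref{constraint12}, so a given optimal $z^*$ need not satisfy $\sum_{i} z^*_{j+(i-1)n_l} = 1$ for every $j$; the raw image $x^*_{ij} = z^*_{j+(i-1)n_l}$ may therefore violate \eqref{constraint12}. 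The resolution exploits precisely the identical-jobs hypothesis: the objective and the knapsack load depend on $z^*$ only through the per-model counts $k_i = \sum_{(i-1)n_l < r \le i n_l} z^*_r$, and since the $n_l$ copies of a model are interchangeable, any selection with these counts yields the same value and the same load.

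I would then argue that, because $\sum_i k_i = n_l$ with each $k_i \le n_l$, the counts $(k_1,\dots,k_m)$ can be realised by a ``column-aligned'' selection in which, for every job index $j$, exactly one model $i$ has its $j$-th copy selected; concretely, one partitions the $n_l$ job indices into consecutive blocks of sizes $k_1,\dots,k_m$ and assigns each block to the corresponding model. This rearranged selection has the same objective and the same load as $z^*$, hence remains optimal for the CCKP, and its image under the index map satisfies both \eqref{constraint11} and \eqref{constraint12}, so it is feasible -- and therefore optimal -- for $\PI'$. Combining the two inclusions gives equality of the two optima, and taking such an aligned optimal $z^*$ shows that $x^*_{ij} = \{z^*_r : r = j+(i-1)n_l\}$ is an optimal solution of $\PI'$, as claimed. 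The crux throughout is the interchangeability of the identical copies of each model, which is exactly what makes the single cardinality constraint of the CCKP equivalent to the $n_l$ assignment constraints of $\PI'$.
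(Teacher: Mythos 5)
Your proposal is correct and follows essentially the same route as the paper: one direction by observing that summing the per-job equalities yields the cardinality constraint (so CCKP is a relaxation), and the other by using the identical-jobs property to reassign an optimal selection according to its per-model counts $k_i$ into a column-aligned solution feasible for $\PI'$ with the same objective value. Your write-up is in fact slightly more careful than the paper's in flagging that the raw image of an arbitrary optimal $z^*$ may violate \eqref{constraint12} and therefore requires this rearrangement step.
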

\begin{proof}
By construction, $\PI'$ and CCKP have one-to-one mapping between the decision variables, have equivalent objective functions and constraints in~\eqref{constraint11} and~\eqref{constraint21}. They only differ in the constraints~\eqref{constraint12} and~\eqref{constraint22}. We note that \eqref{constraint22} is equivalent to 
\begin{align}\label{sum_constraint}
    \sum_{j=1}^{n_l}\sum_{i = 1}^{m} x_{ij} = n_l.
\end{align}

Let $\PI^\ddagger$ denote the problem $\PI'$ with the constraint~\eqref{constraint12} replaced by~\eqref{sum_constraint}. From the above observations, $\PI^\ddagger$ is equivalent to CCKP, and thus it is sufficient to show that an optimal solution $\{x^\ddagger_{ij}\}$ for $\PI^\ddagger$ is optimal for $\PI'$. Since~\eqref{sum_constraint} is a relaxation of the constraint in~\eqref{constraint12}, the optimal objective value of $\PI^\ddagger$ should be at least the optimal objective value of $\PI'$. On the other hand, given $\{x^\ddagger_{ij}\}$, consider the assignment where for each model $i$, we assign $\sum_{j=1}^{n_l}x^\ddagger_{ij}$ jobs to it. Given that the jobs are identical, and from~\eqref{sum_constraint}, all the $n_l$ jobs will be assigned exactly once to some model. Thus, this assignment is feasible for $\PI'$ and objective value under this assignment will be equal to the optimal objective value of $\PI^\ddagger$. Thus, $\{x^\ddagger_{ij}\}$ is also an optimal solution for $\PI'$. \end{proof}

In Algorithm~\ref{method:PI} we present AMDP for solving $\PI$. The optimality of AMDP is a direct consequence of Lemmas~\ref{lem:opt_cloud_sched} and~\ref{lem:equivalence} and is stated in the following theorem.
\begin{theorem}
AMDP is an optimal algorithm for $\PI$.
\end{theorem}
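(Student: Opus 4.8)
The plan is to show that AMDP (Algorithm~\ref{method:PI}) produces an optimal schedule for $\PI$ by decomposing the argument into two independent pieces that have already been established as lemmas: the ES assignment and the ED assignment. Since the theorem statement asserts optimality and the text explicitly says the result is a direct consequence of Lemma~\ref{lem:opt_cloud_sched} and Lemma~\ref{lem:equivalence}, the proof should essentially be a short composition of these two facts rather than a fresh combinatorial argument.

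First I would invoke Lemma~\ref{lem:opt_cloud_sched} to fix the number of jobs sent to the ES at $n_c = \floor{\frac{T}{p_{m+1}}}$, noting that any optimal schedule for $\PI$ must offload exactly this many identical jobs to the ES: fewer would sacrifice accuracy (the ES has the highest accuracy $a_{m+1} \geq a_i$ for all $i$), and more would violate the makespan constraint~\eqref{constraint02}. Because the jobs are identical, the particular choice of which $n_c$ jobs go to the ES is immaterial, so AMDP is justified in assigning the last $n_c$ jobs to the ES without loss of optimality. This reduces the remaining task to optimally assigning the first $n_l = n - n_c$ jobs among the ED models $1,\ldots,m$ subject to~\eqref{constraint11}, which is precisely the problem $\PI'$.

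Next I would apply Lemma~\ref{lem:equivalence}, which establishes that solving the CCKP in~\eqref{constraint21}--\eqref{constraint23} and mapping the solution back via $x^*_{ij} = \{z^*_r : r = j + (i-1)n_l\}$ yields an optimal solution for $\PI'$. Since AMDP solves exactly this CCKP (e.g., by dynamic programming over the knapsack capacity and the cardinality count), its ED assignment is optimal for $\PI'$. Combining the two pieces, the schedule returned by AMDP attains, on the ES, the maximum accuracy $n_c\, a_{m+1}$ achievable under~\eqref{constraint02}, and on the ED the maximum accuracy $\sum_{i=1}^{m} a_i \sum_{j=1}^{n_l} x^*_{ij}$ achievable under~\eqref{constraint11}. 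Their sum equals the optimal objective of $\PI$, so AMDP is optimal.

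The only subtle point—and the step I would be most careful about—is arguing that the ES and ED subproblems genuinely decouple, i.e.\ that fixing the ES load at exactly $n_c$ jobs does not forfeit any achievable total accuracy through some cross-machine tradeoff. This is where Lemma~\ref{lem:opt_cloud_sched} does the real work: because the two makespan constraints~\eqref{constraint01} and~\eqref{constraint02} act on disjoint sets of variables and the ES strictly dominates in accuracy, there is no benefit to trading an ES slot for additional ED capacity. I would make this explicit by a short exchange argument (given any feasible schedule with $n_c' < n_c$ jobs on the ES, moving an ED job to a free ES slot is feasible and non-decreasing in accuracy), after which the decomposition is clean and the two lemmas compose to give the result.
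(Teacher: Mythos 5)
Your proposal is correct and matches the paper's own argument, which simply states that the theorem is a direct consequence of Lemma~\ref{lem:opt_cloud_sched} and Lemma~\ref{lem:equivalence}; your added exchange argument for why the ES and ED subproblems decouple is a welcome elaboration of a step the paper leaves implicit, but it is not a different route.
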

\begin{algorithm}
\begin{algorithmic}[1]
\caption{AMDP}
\label{method:PI}
\STATE $n_l = n - \floor{\frac{T}{p_{m+1}}}$
\STATE Assign the jobs $j \in \{n_l+1,\ldots,n\}$ to the ES 
\STATE Solve the CCKP for $\{z^*_r\}$ using the DP algorithm
\STATE Assignment for remaining jobs: $x^*_{ij}\! = \!\{z^*_r: r = j\! +\! (i-1)n_l\}$ for all $i \in M \backslash \{m+1\}$, and $j \in \{1,\ldots,n_l\}$.
\end{algorithmic}
\end{algorithm}

\subsection{The DP Algorithm}
The main step in AMDP is to solve the CCKP for which one can leverage existing branch-and-bound or Dynamic Programming (DP) algorithms~\cite{Kellerer2004}. We use the DP algorithm since it has pseudo-polynomial runtime for computing the optimal solution. 
The summarize the main steps of the algorithm. Let $s$, $k$, and $\tau$ denote positive integers. We define $y_s(\tau,k)$ as the maximum accuracy that can be achieved by selecting items from the set $\{1,\ldots,s\}$, where $s \leq mn_l$, given a time constraint $\tau$ ($\leq T$) and the number of items to be selected are $k$ ($\leq n_l$).
\begin{align}
    y_s(\tau,k)\! = \!\max\left\{\!\sum_{r=1}^{s}\bar{a}_r z_r\Big{|}\sum_{r = 1}^{s}\! \bar{p}_r z_r \leq \tau,\!\! \sum_{r = 1}^{j} z_r = k, z_r\!\! \in\!\! \{0,1\}\!\!\right\}
\end{align}
The DP iterations are given below:
\[
y_s(\tau,k)\! =\! 
\begin{cases} 
y_{s-1}(\tau,k) & \text{if } \bar{p}_s \geq \tau  \\
\max\{y_s(\tau\!-\!\bar{p}_s,k\!-\!1)\!+\!\bar{a}_s,y_{s-1}(\tau,k)\} & \text{otherwise.} 
\end{cases}
\]    
We compute the solution for $y_s(T,n_l)$, where $s = mn_l$. 

The computational complexity of the DP algorithm is $O(mnT)$ and AMDP has the same computational complexity. 

\textit{\textbf{Remark:}} We note that AMDP can be easily extended to a slightly more general setting where the processing times of the jobs only depend on the models, but they may have different communication times. This implies that $p_{ij} = p_{i}$ for all the models on the ED, and $p'_{(m+1)j}=p'_{(m+1)}$ for all $j$, but $c_j$ values could be different. This setting is applicable in scenarios where the data samples are heterogeneous, but their processing times on the models do not vary much. In this case, we order the list of jobs in the increasing order of their communication times and offload the jobs from the start of the list to the ES until constraint $T$ is met. Since the jobs have same processing times on the models of the ED, it is easy to argue that this assignment optimal for the ES. For the remaining jobs we solve the CCKP.
}
\section{Experimental Results}\label{sec:experiments}
 In this section, we first present the experimental setup. We then present the implementation details for estimating the processing and communication times. As explained in Section~\ref{sec:related}, the aspect of multiple models on the ED has not been considered in computation offloading literature and  there are no existing algorithms that are applicable for the problem at hand for a performance comparison. Therefore, we present the performance comparison between \algo and a baseline Greedy Round Robin Algorithm (\blalgo). Given the list of jobs, \blalgo offloads them from the start of the list to the ES until the constraint $T$ is met. The remaining jobs are assigned in a round robin fashion to the models on the ED until the constraint $T$ is met. Any further remaining jobs are assigned to model $1$. Note that \blalgo solution may violate the time constraint $T$ and its runtime is $O(n)$.

\subsection{Experimental Setup}
Our experimental setup comprises a Raspberry Pi device (the ED) and a local server (the ES) that are connected and located in the same LAN.
Raspberry Pi has $4$ cores, $1.5$ GHz CPU frequency, and $4$ GB RAM, with the operating system Raspbian 10, while the server has $512$ cores, $1.4$ GHz CPU frequency, and $504$ GB RAM, with the operating system Debian 11. All the functions on Raspberry Pi and on the server are implemented using Python 3. To offload images from Raspberry Pi to the ES we used HTTP protocol, and implemented HTTP Client and Server using Requests and Flask, respectively. 

The data samples are images from the ImageNet dataset for which we use DNN models for inference. On Raspberry Pi we import, from the TensorFlow Lite library, two pre-trained MobileNets corresponding to two values $0.25$ and $0.75$ for the hyperparameter $\alpha$, which is a width multiplier for the DNN~\cite{Howard2017}. Both the models are quantized and require input images of dimensions $128\times 128$. On the ES, we import a pre-trained ResNet50 model~\cite{He2016} from the Tensorflow library. The ResNet50 model requires input images of dimensions $224 \times 224$. Images of different dimensions need to be reshaped to the respective dimensions on the ES and the ED. The average test accuracies for the three models are presented in Table~\ref{tab:avg_accuracies}.
\begin{table}[ht!]
    \centering
    \begin{tabular}{|c | c|} 
        \hline
        \textbf{Model} & \textbf{Average test Accuracy} \\ 
        \hline\hline
        MobileNet $\alpha=0.25$ (model $1$) & $0.395$ \\ 
        \hline
        MobileNet $\alpha=0.75$ (model $2$) & $0.559$\\
        \hline
        ResNet50 (model $3$) & $0.771$ \\
        \hline
    \end{tabular}
    \caption{Test accuracies of the considered DNN models~\cite{TFLite}.}
    \label{tab:avg_accuracies}
\end{table}

We implemented both \algo and \blalgo on Raspberry Pi in Python 3. \algo takes up to $50$ ms for computing a schedule for $40$ jobs. 
The runtime of \algo is dominated by the runtime for solving LP-relaxation. 
In future, we plan to reduce this runtime by implementing \algo in C. We, however, implemented AMDP in C and observed that it has a runtime less than $1$ ms on Raspberry Pi for computing an optimal schedule for $300$ jobs. This shows the advantage of using AMDP over \algo for the case of identical jobs.

\textbf{\textit{Remark:}} The testbed that we implemented uses a single thread on the Raspberry Pi. To further improve the performance of this system one may use two different threads, one for handling the offloading of the tasks to different models and another for retrieving the responses and rendering the results. This implementation improves the quality of experience of the user as the jobs will be rendered as soon as they are processed.

\subsection{Estimation of Processing and Communication Times}
In our experiments, we consider images of dimensions $128\times 128$, $512\times 512$, and $1024\times 1024$, for which we estimate the processing and communication times using the following procedure. On Raspberry Pi, we run $30$ samples of same image dimensions and use the median processing times as our estimate. Note that median is an unbiased estimate, and unlike the mean, it is not affected by cold start. We note that the estimates for the processing times include the reshape times.

In order to estimate the total time on the ES, we use the HTTP client/server connection to send $30$ images of same image dimensions from Raspberry Pi to the server. For each image we measure the time till the reception of an inference for the image from the ES, and finally use the median. At the server we also measure the reshape time and the processing time, and the estimate for the communication time is obtained by subtracting the reshape time and the processing time from the total time. Since Raspberry Pi and the dedicated local server are in the same LAN, the observed communication times are almost constant with negligible variance. This is also true for the observed processing times, and we will later verify this when implementing the schedules using these estimates. 

The estimates for the processing times are presented in Table \ref{tab:processing_times}. Observe that the processing times increase with the model size. On Raspberry Pi, the variance in the processing times on a model is small. In contrast, the total times on the ES vary with the dimensions of the image and are an order of magnitude higher than the processing times on Raspberry Pi. In Figure~\ref{fig:timeErdos}, we present the communication, reshape, and processing times on the ES. It is worth noting that, as the dimensions of the image increases, both communication time and the reshape times increase. Thus, it is more advantageous to offload images with smaller dimensions. 

\begin{table}[ht!]
    \centering
        \begin{tabular}{|c|c|c|c|} 
            \hline
             \textbf{Model/Image Dimension} & $128\times128$ & $512\times512$& $1024\times1024$\\ 
            \hline\hline
            MobileNet $\alpha=0.25$ & $0.01$ & $0.011$ & $0.011$ \\ 
            \hline
            MobileNet $\alpha=0.75$ & $0.04$ & $0.04$ & $0.043$\\
            \hline
            ResNet50 & $0.28$ & $0.32$ & $0.38$\\
            \hline
        \end{tabular}
        \caption{Estimated processing times (in seconds), for MobileNets on Raspberry Pi and ResNet50 on the server.}
    \label{tab:processing_times}
\end{table}

\begin{figure}[ht!]
    \centering
    \vspace{-0.7cm}
    \includegraphics[width=2.8in]{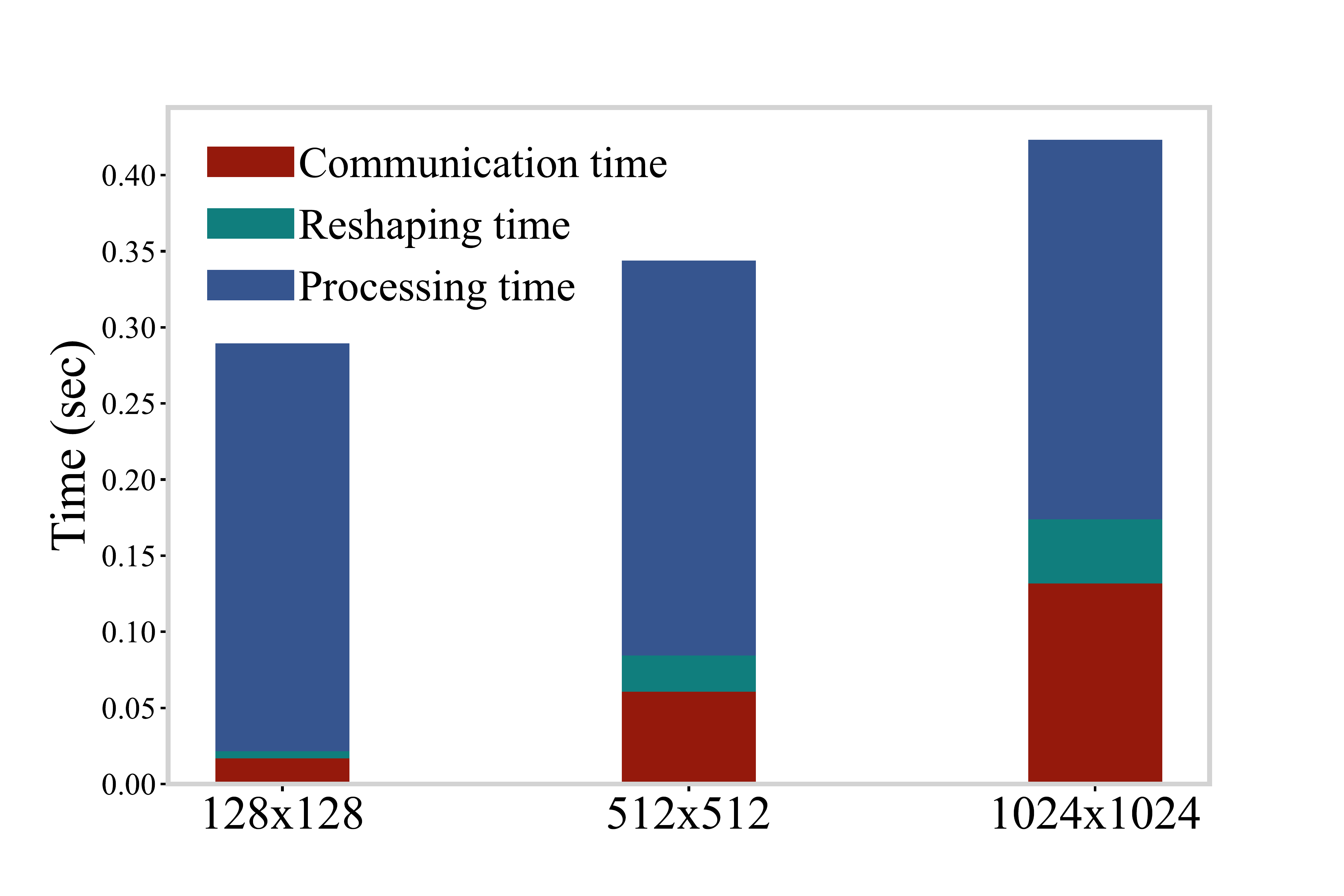}
    \caption{Estimated total time for inference on the ES.}
    \label{fig:timeErdos}
\end{figure}


\subsection{Performance of \algo}
\begin{figure}[ht!]
\centering
\vspace{-2.5cm}
\hspace*{-1cm}                                                           
\includegraphics[width=4.0in]{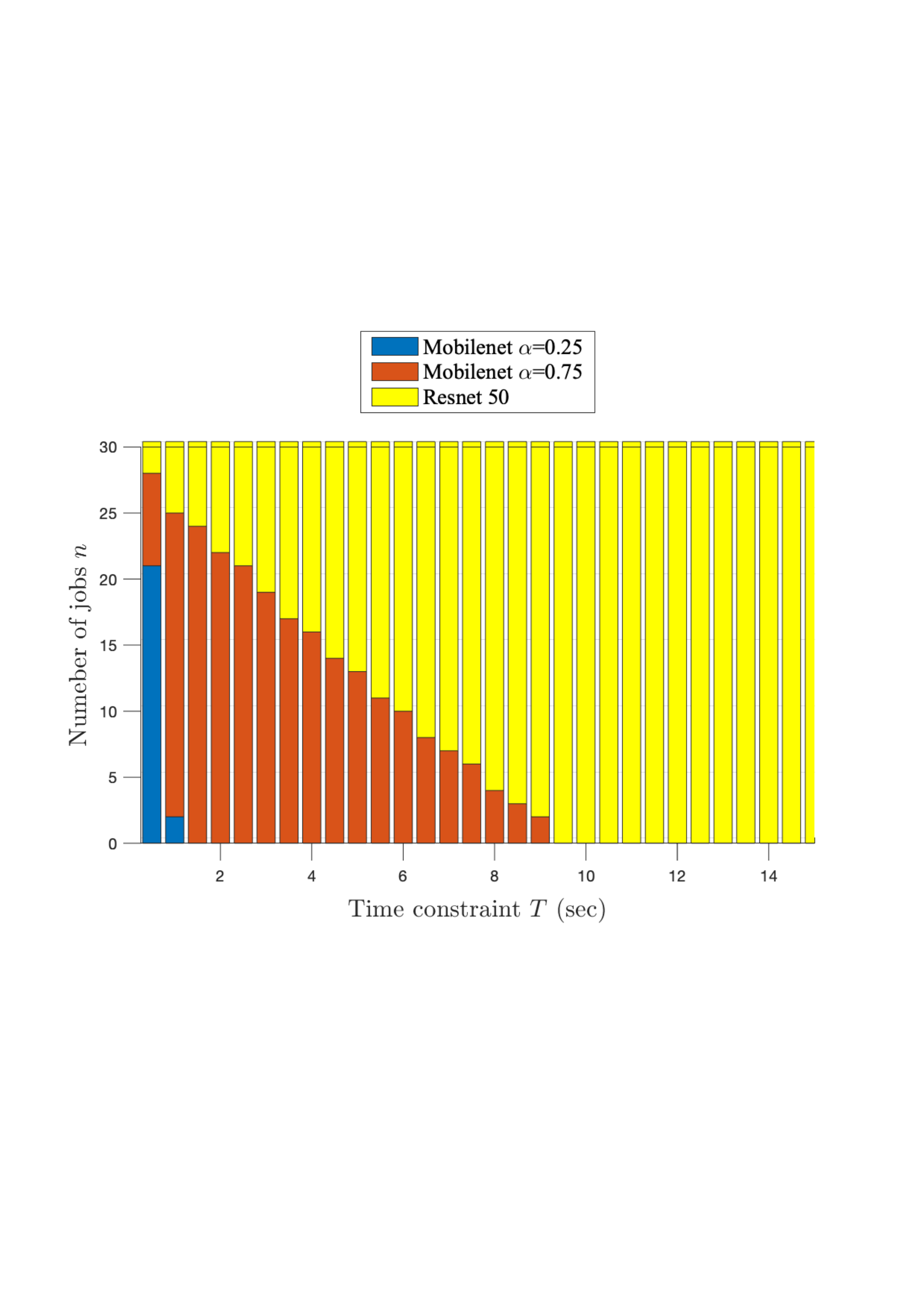}
\vspace{-4.6cm}
\caption{Job assignment under \algo for varying $T$.}
\label{fig:offloading_30jobs}
\end{figure}
In Figure~\ref{fig:offloading_30jobs}, we examine the number of jobs assigned to different models under \algo. Observer that as $T$ increases the number of jobs assigned to larger models increases. Also, note that MobileNet $\alpha = 0.25$ is only being used when $T$ is small. In all the subsequent figures, for each point, we run 30 experiments and compute the average. Recall that the total accuracy $A^\dagger$ is based on the average test accuracy of the models. In addition to $A^\dagger$, we also present the total true accuracy for \algo by summing the Top-1 accuracies we observe from executing the images under the given schedule $\mathbf{x}^\dagger$. In Figures \ref{fig:totalacc_fixedN_experimental} and \ref{fig:accuracy_fixedT_experimental}, we compare total accuracy achieved under different schedules, by varying $T$ and $n$, respectively. For $n= 60$, no LP-relaxed solution exists for $T = 0.5$ sec.

From both figures, we observe that $A^\dagger$ overlaps with, and in some cases exceeds, the total accuracy of the LP-relaxed solution $A^*_\text{LP}$. This is because all the processing times (cf. Table~\ref{tab:processing_times}) are less than $0.5$ sec, the minimum value used for $T$, and therefore, from Corollary~\ref{cor:accuracyBound} the worst-case bound is at most $a_{3} - a_1 = 0.376$ (cf. Table~\ref{tab:avg_accuracies}). Furthermore, in some cases, where $T$ is large enough, \algo may assign both the fractional jobs to the server and $A^\dagger$ exceeds $A^*_\text{LP}$. In these cases, however, the makespan under $\mathbf{x}^\dagger$ exceeds $T$.

In Figure \ref{fig:totalacc_fixedN_experimental} we observe that the total true accuracy of \algo is lower than $A^\dagger$, while in Figure~\ref{fig:accuracy_fixedT_experimental} it exceeds when the number of jobs are small. We note that this behaviour is highly dependent on the set of images we chose and the DNN models. This is expected because the true accuracy for an image on a model can have a large deviation from the average test accuracy of that model. From Figure~\ref{fig:totalacc_fixedN_experimental}, we observe that \algo always has higher total true accuracy than \blalgo with a percentage gain between $20$--$60$\% with an average of $40$\%, with lower percentage gains at smaller $T$. The latter fact is also confirmed in Figure~\ref{fig:accuracy_fixedT_experimental} when $T = 0.5$ sec. This is expected, because for $T = 0.5$ sec, not many jobs can be offloaded to the server as the processing times are around $0.3$ seconds. For $T = 4$ we again see significant gains of around $40$--$50$\%.

\begin{figure}[ht!]
\centering
\includegraphics[width=2.8in]{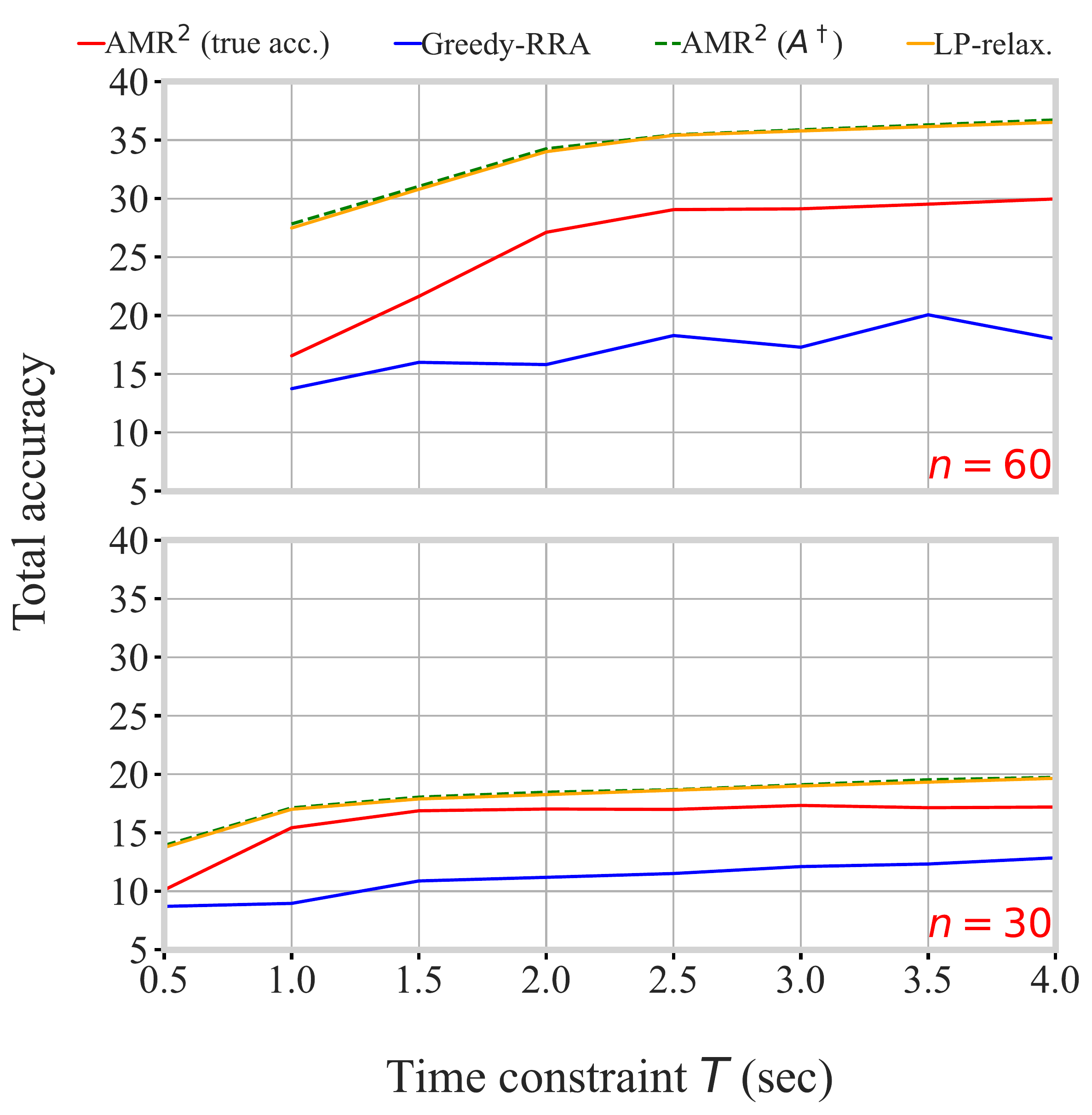}
\caption{Comparison of the total accuracy under different schedules for varying $T$ and for $n = 30$ and $n=60$.}
\label{fig:totalacc_fixedN_experimental}
\end{figure}

\begin{figure}[ht!]
\centering
\vspace{-0.3cm}
\includegraphics[width=2.8in]{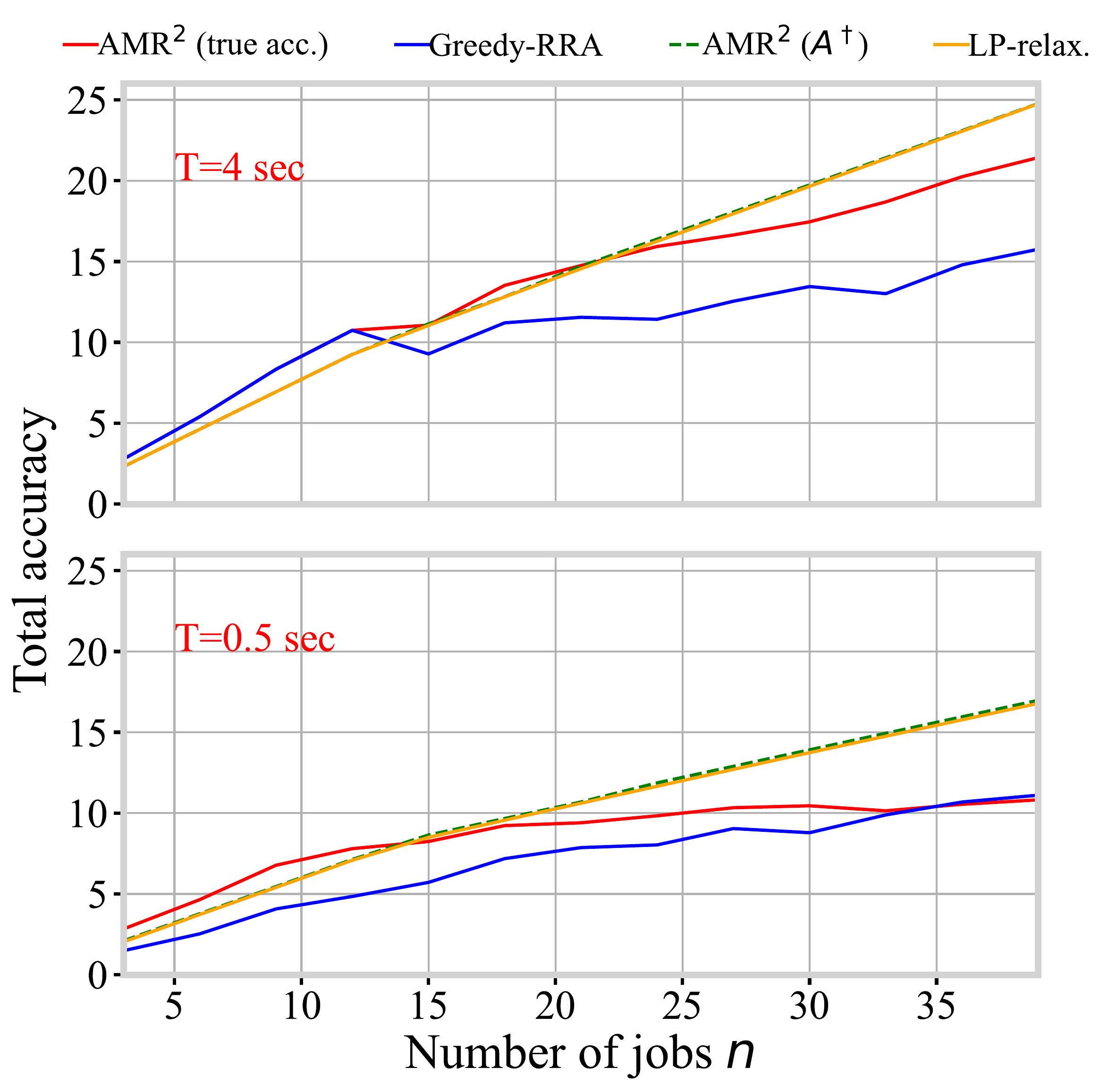}
\caption{Comparison of the total accuracy under different schedules for varying $n$ and $T=0.5$ sec and $T = 4$ sec.}
\label{fig:accuracy_fixedT_experimental}
\end{figure}


\begin{figure}[ht!]
\centering
\includegraphics[width=2.8in]{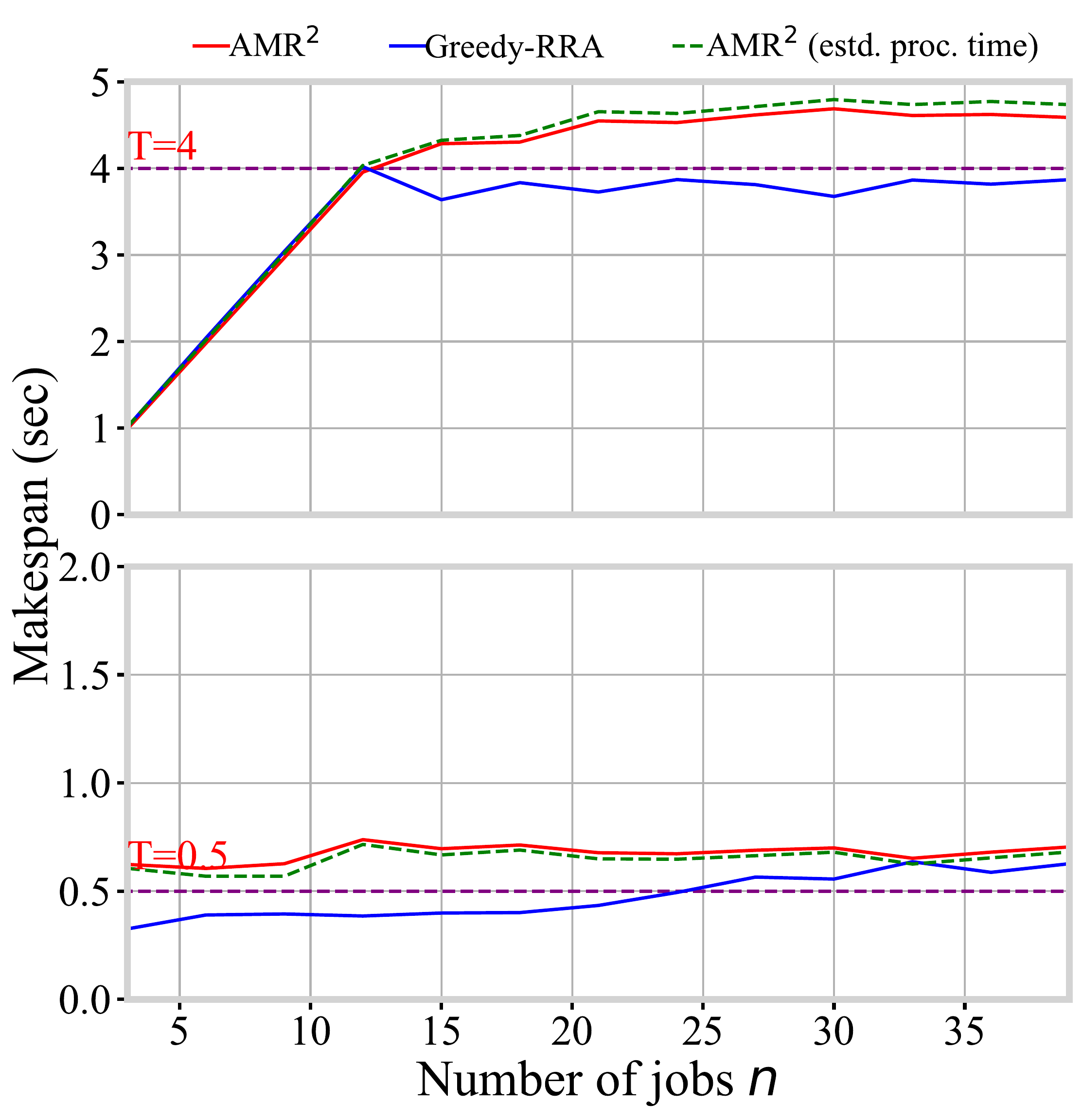}
\caption{Makespan under \algo and \blalgo for varying $n$, and $T=0.5$ sec and $T = 4$ sec.}
\vspace{-0.3cm}
\label{fig:makespan_fixedT_experimental}
\end{figure}
In Figure \ref{fig:makespan_fixedT_experimental}, we present the makespan achieved by \algo and \blalgo for varying $n$. The real-time makespan, i.e., the time elapsed at Raspberry Pi from the start of scheduling the jobs till the finishing time of the last job, is indicated by \algo in the legend. The estimated makespan that is numerically computed using the schedule $\mathbf{x}^\dagger$ and the estimated processing and communication times is indicated by \algo (estd. proc. time). Observe that both these makespans have negligible difference asserting that the variances in our estimates for both communication and processing times are small. Also, observe that both \algo and \blalgo violate the time constraint for different problem instances. For $T=4$, \algo violates $T$ for $n \geq 13$, but then it saturates at a makespan with maximum percentage of violation of $15$\%. This is expected, because from Lemma~\ref{lem:2fractionaljobs} there cannot be more than two fractional jobs irrespective of $n$ value and thus, the constraint violation due to the reassignment of the fractional jobs do not increase beyond $n = 30$. This saturation effect can also be observed for $T = 0.5$. In this case, the percentage of violation under \algo is higher, up to $40$\%, because the processing times on the server are comparable to $T = 0.5$ sec and reassigning a fractional job to the server results in higher percentage of violation.

\section{Conclusion}\label{sec:conclusion}
We have studied the offloading decision for inference jobs between an ED and an ES, where the ED has $m$ models and the ES has a state-of-the-art model. 
Given $n$ data samples at the ED, we proposed an approximation algorithm \algo for maximizing the total accuracy for the inference jobs subject to a time constraint $T$ on the makespan. We proved that the makespan under \algo is at most $2T$, and its total accuracy is lower than the optimal total accuracy by at most $2$, and for typical problem instances it is lower by at most $a_{m+1}-a_1$. When the data samples are identical, we have proposed AMDP, a pseudo-polynomial time algorithm to compute the optimal schedule. We have implemented \algo on Raspberry Pi and demonstrated its efficacy in improving the inference accuracy for classifying images within a time constraint $T$. Also, under the considered scenarios \algo provides, on average, $40$\% higher total accuracy than that of \blalgo. 

In our problem model, we considered that the communication times are deterministic and in our testbed we used an architecture where the ED and the ES are connected via Ethernet. However, communication times will be random if one considers offloading over wireless links, which we leave for future work.

\bibliographystyle{IEEEtran}
\bibliography{main}

\end{document}